\definecolor{TodoColour}{HTML}{967B7D}
\definecolor{TodoColourC}{HTML}{FDB700}
\definecolor{EmphColour}{HTML}{9554B3} 
\definecolor{LinkColour}{HTML}{782075} 
\newcommand{\todoC}[2][\relax]{%
  \ifx#1\relax%
  \todo[color=TodoColourC, linecolor=TodoColourC, bordercolor=TodoColourC]{#2}%
  \else%
  \todo[color=TodoColourC, linecolor=TodoColourC, bordercolor=TodoColourC, #1]{#2}%
  \fi}
\newcommand{\autoref}[2][\relax]{%
  \textcolor{LinkColour}{\ifx#1\relax\relax\else#1~\fi\ref{#2}}}
\definecolor{LinkColour}{HTML}{2F4180}
\definecolor{MPurple}{HTML}{5A3A69}
\definecolor{MYellow}{HTML}{FCB500}
\definecolor{MCyan}{HTML}{41665A}
\definecolor{MYellowGrey}{HTML}{544F41}
\definecolor{MPurpleGrey}{HTML}{CAC6CC}
\newcommand{\Bigmid}{\Bigl.\;\Bigm|\;\Bigr.}
\newcommand{\Bignmid}{\Bigl.\;\not\hspace{0.2em}\Bigm|\;\Bigr.}
\newcommand{\A}{\mathbb{A}}
\newcommand{\EE}{\mathbb{E}}
\newcommand{\F}{\mathbb{F}}
\newcommand{\G}{\mathbb{G}}
\newcommand{\K}{\mathbb{K}}
\newcommand{\Q}{\mathbb{Q}}
\newcommand{\N}{\mathbb{N}}
\newcommand{\Z}{\mathbb{Z}}
\def\pisiSE{$\Pi\Sigma$}
\newcommand{\ie}{i.\,e.}
\newcommand{\wrt}{w.\,r.\,t.}
\newcommand{\eg}{e.\,g.}
\newcommand{\aka}{also known as}
\newcommand{\RHS}{right hand side}
\newcommand{\LHS}{left hand side}
\newcommand{\latin}[1]{\textit{#1}}
\newcommand{\qqtext}[1]{\qquad\text{#1}\qquad}
\newcommand{\qqforall}{\qquad\text{for all}\quad}
\DeclareMathOperator{\spread}{spread}
\DeclareMathOperator{\disp}{disp}
\DeclareMathOperator{\per}{per}
\DeclareMathOperator{\aperiodic}{ap}
\DeclareMathOperator{\LRCM}{LRCM}
\DeclareMathOperator{\const}{const}
\DeclareMathOperator{\diag}{diag}
\DeclareMathOperator{\lcm}{lcm}
\renewcommand{\th}{\raisebox{0.7ex}{\scriptsize th}}
\newcommand{\dtilde}[1]{\hat{#1}}
\newcommand{\PiExt}{$\Pi$-extension}
\newcommand{\SigmaExt}{$\Sigma$-extension}
\newcommand{\PSE}{$\Pi\Sigma$-extension}
\newcommand{\PSEUpcase}{$\Pi\Sigma$-Extension}
\newcommand{\PSEs}{\PSE{}s}
\newcommand{\PSEUpcases}{\PSEUpcase{}s}
\newcommand{\DF}{difference field}
\newcommand{\CV}[2]{{#1}^{#2}}
\newcommand{\Mat}[3]{{#1}^{#2\times#3}}
\newcommand{\MatGr}[2]{\mathrm{GL}_{#2}(#1)}
\providecommand{\ID}[1][\relax]{\mathbf{1}\ifx#1\relax\relax\else_{#1}\fi}
\def\auxZEROsubscript#1,#2;{_{#1\times#2}}
\providecommand{\ZERO}[1][\relax]{%
  \mathbf{0}\ifx#1\relax\relax\else\auxZEROsubscript#1;\fi}
\newcommand{\Ore}{\F(t)[\sigma]}
\newcommand{\OreInv}{\F(t)[\sigma^{-1}]}
\newcommand{\OLP}{\F(t)[\sigma,\sigma^{-1}]}
\begin{document}

\title*{Denominator Bounds for Systems of Recurrence Equations using
  \PSEUpcases\thanks{Supported by the Austrian Science Fund (FWF) grant SFB F50 (F5009-N15)}}
\titlerunning{Denominators of Recurrence Systems} 
\author{Johannes Middeke \and Carsten Schneider}
\institute{Johannes Middeke
  \at Research Institute for Symbolic Computation (RISC), 
  Johannes Kepler University, Altenbergerstra\ss{}e 69, A-4040 Linz, Austria
  \email{jmiddeke@risc.jku.at}
  \and
  Carsten Schneider
  \at Research Institute for Symbolic Computation (RISC), 
  Johannes Kepler University, Altenbergerstra\ss{}e 69, A-4040 Linz, Austria
  \email{cschneid@risc.jku.at} 
}
%
%
\maketitle

\abstract{We consider linear systems of recurrence equations whose coefficients are given 
in terms of indefinite nested sums and products covering, e.g., the harmonic numbers, hypergeometric products, $q$-hypergeometric products or their mixed versions. These linear systems are formulated in 
the setting of \PSEs\ and our goal is to find a denominator bound (\aka\ universal
  denominator) for the solutions; \ie, a non-zero polynomial $d$ such
  that the denominator of every solution of the system divides
  $d$. This is the first step in computing all rational solutions of
  such a rather general recurrence system. Once the denominator bound is known, the
  problem of solving for rational solutions is reduced to the problem
  of solving for polynomial solutions.}

 \abstract*{We consider linear systems of recurrence equations whose coefficients are given 
in terms of indefinite nested sums and products covering, e.g., the harmonic numbers, hypergeometric products, $q$-hypergeometric products or their mixed versions. These linear systems can be formulated in 
the setting of \PSEs\ and our goal is to find a denominator bound (\aka\ universal
  denominator) for the solutions; \ie, a non-zero polynomial $d$ such
  that the denominator of every solution of the system divides
  $d$. This is the first step in computing all rational solutions of
  such a rather general recurrence system. Once the denominator bound is known, the
  problem of solving for rational solutions is reduced to the problem
  of solving for polynomial solutions.}

\section{Introduction}

Difference equations are one of the central tools within symbolic summation. 
In one of its simplest forms, the telescoping equation plays a key role: given a sequence $f(k)$, find a solution $g(k)$ of
$$f(k)=g(k+1)-g(k).$$
Finding such a $g(k)$ in a given ring/field or in an appropriate extension of it (in which the needed sequences are represented accordingly) yields a closed form of the indefinite sum
$\sum_{k=a}^b f(k)=g(b+1)-g(a)$.
Slightly more generally, solving the creative telescoping and more generally the parameterized telescoping equation enable one to search for linear difference equations of definite sums. Finally, linear recurrence solvers enhance the the summation toolbox to find closed form solutions of definite sums. This interplay between the different algorithms to solve difference equations has been worked out in~\cite{Abramov:71,Gosper:78,Zeilberger:91,Petkov:92,AeqlB,Paule:95} for hypergeometric sums and has been improved, e.g., by difference ring algorithms~\cite{Karr81,Bron:00,Schneider:13a,Schneider:16a,Schneider:17} to the class of nested sums over hypergeometric products, $q$-hypergeometric products or their mixed versions, or by holonomic summation algorithms~\cite{Zeilberger:90a,Chyzak:00} to the class of sequences/functions that can be described by linear difference/differential equations.

More generally, coupled systems of difference equations are heavily used to describe problems coming from practical problem solving. E.g., big classes of Feynman integrals in the context of particle physics can be described by coupled systems of linear difference equations; for details and further references see~\cite{Schneider:16b}. Here one ends up at $n$ Feynman integrals $I_1(k),\dots,I_n(k)$ which are solutions of a coupled system. More precisely, we are given matrices $A_0(k),\dots A_l(k)\in\K(k)^{m\times n}$ with entries from the rational function field $\K(k)$, $\K$ a field containing the rational numbers, and a vector $b(k)$ of length $m$ in terms of nested sums over hypergeometric products such that the following coupled system holds:
\begin{equation}\label{Equ:FunctionForm}
A_l(k) \left(\begin{matrix} I_1(k+l)\\ \vdots\\ I_n(k+l)\end{matrix}\right)+A_{l-1}(k) \left(\begin{matrix} I_1(k+l-1)\\ \vdots\\ I_n(k+l-1)\end{matrix}\right)+\dots+A_0(k)\left(\begin{matrix} I_1(k)\\ \vdots\\ I_n(k)\end{matrix}\right)=b(k).
\end{equation}
Then one of the main challenges is to solve such a system, e.g., in terms of d'Alembertian~\cite{Abramov:94,Abramov:96} or Liouvillian solutions~\cite{Singer:99,Petkov:2013}. Furthermore, solving coupled systems arises as crucial subproblem within holonomic summation algorithms~\cite{Chyzak:00}. In many situations, one proceeds as follows to get the solutions of such a coupled system: first decouple the system using any of the algorithms described in~\cite{Barkatou:93,Zuercher:94,Abramov:96,myPhDthesis,BCP13} such that one obtains a scalar linear recurrence in only one of the unknown functions, say $I_1(k)$, and such that the remaining integrals $I_2(k),\dots I_n(k)$ can be expressed as a linear combination of the shifted versions of $I_1(k)$ and the entries of $b(k)$ over $\K(k)$. Thus solving the system~\eqref{Equ:FunctionForm} reduces to the problem to solve the derived linear recurrence and, if this is possible, to combine the solutions such that $I_1(k)$ can be expressed by them. Then given this solution, one obtains for free also the solutions of the remaining integrals $I_2(k),\dots,I_n(k)$.
This approach in general is often rather successful since one can rely on the very well explored solving algorithms~\cite{Abramov:89a,AbramovPaulePetkovsec,Abramov:94,Abramov:96,Abramov1995,Abramov:96,Hoeij:98,Singer:99,vanHoeij:99,Bron:00,Schneider:04b,Schneider:05a,Petkov:2013} to determine, e.g., d'Alembertian and Liouvillian solutions for scalar linear recurrence relations and can heavily use summation algorithms~\cite{Karr81,Schneider:07d,Schneider:15,Schneider:16a} to simplify the found solutions.

The main drawback of this rather general tactic of solving a decoupled system is efficiency. First, the decoupling algorithms themselves can be very costly; for further details see~\cite{BCP13}. Second, the obtained scalar recurrences have high orders with rather huge coefficients and the existing solving algorithms might utterly fail to find the desired solutions in reasonable time. Thus it is highly desirable to attack the original system~\eqref{Equ:FunctionForm} directly and to avoid any expensive reduction mechanisms and possible blow-ups to a big scalar equation. Restricting to the first-order case ($m=n=1$), this problem has been treated the first time in~\cite{AB98}. Given an invertible matrix $A(t)$ from $\K(t)^{n\times n}$, find all solutions $y(t)=(y_1(t),\dots,y_n(t))\in\K(t)^n$ such that
\begin{equation}\label{Equ:FirstOrderCase}
y(t+1)-A\,y(t)=0
\end{equation}
holds. As for many other symbolic summations approaches~\cite{Abramov:71,Gosper:78,Karr81,Abramov:89a,Petkov:92,Paule:95,AbramovPaulePetkovsec,Bron:00,Schneider:05a,Schneider:15} one follows the following strategy (sometimes the first step is hidden in certain normal-form constructions or certain reduction strategies):
\begin{enumerate}
\item Compute a universal denominator bound, i.e., a $d(t)\in\K[t]\setminus\{0\}$ such that for any solution $y(t)\in\K(t)^n$ of~\eqref{Equ:FirstOrderCase} we have $d(t)\,y(t)\in\K[t]^n$. 
\item Given such a $d(t)$, plugging $y(t)=\frac{y'(t)}{d(t)}$ into~\eqref{Equ:FirstOrderCase} yields an adapted system for the unknown polynomial $y'(t)\in\K[t]^n$. Now compute a degree bound, i.e., a $b\in\N$ such that the degrees of the entries in $y'$ are bounded by $b$.
\item Finally, inserting the potential solution $y'=y_0+y_1\,t+\dots+y_b\,t^b$ into the adapted system yields a linear system in the components $(y_{01}, \ldots, y_{0n},\dots,y_{b1},\ldots,y_{bn}) \in \K^{n(b+1)}$ of the unknown vectors $y_0,\ldots,y_b \in \K^n$. Solving this system yields all $y_0,\dots,y_b \in\K^{n}$ and thus all solutions $y(t)\in\K(t)^n$ for the original system~\eqref{Equ:FirstOrderCase}.
\end{enumerate}
For an improved version exploiting also ideas from~\cite{Hoeij:98} see~\cite{Abramov:11}. Similarly, the $q$-rational case (i.e., $t\mapsto q\,t$ instead of $t\mapsto t+1$) has been elaborated in~\cite{Abramov:99,Abramov2002}.
In addition, the higher~order case $m=n\in\N$ has been considered in~\cite{AbramovKhmelnov2012} for the rational case. 

In this article, we will push further the calculation of universal denominators (see reduction step (1)) to the general difference field setting of \pisiSE-fields~\cite{Karr81} and more general to the framework of \pisiSE-extensions~\cite{Karr81}. 
Here we will utilise similar as in~\cite{AbramovKhmelnov2012,AbramovBarkatou2014}
algorithms from~\cite{BeckermannChengLabahn2006} to transform in a preprocessing step the coupled system to an appropriate form. Given this modified system, we succeed in generalising compact formulas of universal denominator bounds from~\cite{Barkatou1999,CPS:08} to the general case of \pisiSE-fields. In particular, we generalise the available denominator bounds in the setting of \pisiSE-fields of~\cite{Bron:00,Schneider:04b} from scalar difference equations to coupled systems.
As consequence, the earlier work of the denominator bounding algorithms is covered in this general framework and big parts of the $q$-rational, multibasic and mixed multibasic case~\cite{BauerPetkovsek1999} for higher-order linear systems are elaborated. More generally, these denominator bound algorithms enable one to search for solutions of coupled systems~\eqref{Equ:FunctionForm} where the matrices $A_i(k)$ and the vector $b(k)$ might consist of expressions in terms of indefinite nested sums and products and the solutions might be derived in terms of such sums and products. Furthermore, these algorithms can be used to tackle matrices $A_i(k)$ which are not necessarily square. Solving such systems will play an important role for holonomic summation algorithms that work over such general difference fields~\cite{Schneider:05d}. In particular, the technologies described in the following can be seen as a first step towards new efficient solvers of coupled systems that arise frequently within the field of particle physics~\cite{Schneider:16b}.

The outline of the article is as follows. In Section~\ref{Sec:PiSigma} we will present some basic properties of \pisiSE-theory and will present our main result concerning the computation of the aperiodic part of a universal denominator of coupled systems in a \pisiSE-extension. In Section~\ref{Sec:OrePolynmials} we present some basic facts on Ore polynomials which we use as an algebraic model for recurrence operators and introduce some basic definitions for matrices. With this set up, we will show in Section~\ref{Sec:DenominatorBounds}
how the aperiodic part of a universal denominator can be calculated under the assumption that the coupled system is brought into particular regularised form. This regularisation is carried out in Section~\ref{sec:regularise} which relies on row reduction that will be introduced in Section~\ref{sec:row-red}. We present examples in Section~\ref{sec:examples} and conclude in Section~\ref{Sec:Conclusion} with a general method that enables one to search for solutions in the setting of \pisiSE-fields.

\section{Some \pisiSE-Theory and the Main Result}\label{Sec:PiSigma}

In the following we model the objects in~\eqref{Equ:FunctionForm}, i.e, in the entries of $A_0(k),\dots,A_l(k)$ and of $b(k)$ with elements from a field\footnote{Throughout this article, all fields contain the rational numbers $\Q$ as subfield.} $\F$. 
Further we describe the shift operation acting on these elements by a field automorphism $\sigma \colon \F \to \F$. In short, we call such a pair $(\F,\sigma)$ consisting of a field equipped with a field automorphism also a difference field.

\begin{example}\label{Exp:BaseField}
\begin{enumerate}
 \item Consider the rational function field $\F=\K(t)$ for some field $\K$ and the field automorphism $\sigma \colon \F \to \F$ defined by $\sigma(c)=c$ for all $c\in\K$ and $\sigma(t)=t+1$. $(\F,\sigma)$ is also called the rational difference field over $\K$.
 \item Consider the rational function field $\K=\K'(q)$ over the field $\K'$ and the rational function field $\F=\K(t)$ over $\K$. Further define the field automorphism $\sigma \colon \F \to \F$ defined by $\sigma(c)=c$ for all $c\in\K$ and $\sigma(t)=q\,t$. $(\F,\sigma)$ is also called the $q$-rational difference field over $\K$.
 \item Consider the rational function field $\K=\K'(q_1,\dots,q_e)$ over the field $\K'$ and the rational function field $\F=\K(t_1,\dots,t_e)$ over $\K$. Further define the field automorphism $\sigma \colon \F \to \F$ defined by $\sigma(c)=c$ for all $c\in\K$ and $\sigma(t_i)=q_i\,t_i$ for all $1\leq i\leq e$. $(\F,\sigma)$ is also called the $(q_1,\dots,q_e)$-multibasic rational difference field over $\K$.
 \item Consider the rational function field $\K=\K'(q_1,\dots,q_e)$ over the field $\K'$ and the rational function field $\F=\K(t_1,\dots,t_e,t)$ over $\K$. Further define the field automorphism $\sigma \colon \F \to \F$ defined by $\sigma(c)=c$ for all $c\in\K$, $\sigma(t)=t+1$ and $\sigma(t_i)=q_i\,t_i$ for all $1\leq i\leq e$. $(\F,\sigma)$ is also called the mixed $(q_1,\dots,q_e)$-multibasic rational difference field over $\K$.
\end{enumerate}
\end{example}

More generally, we consider difference fields that are built by the following type of extensions. Let $(\F,\sigma)$ be a \DF; \ie, a field $\F$ together with an
automorphism $\sigma \colon \F \to \F$. Elements of $\F$ which are
left fixed by $\sigma$ are referred to as \emph{constants}. We denote
the set of all constants by 
$$\const \F=\{c\in\F\mid \sigma(c)=c\}.$$ 
A \emph{\PSE} $(\F(t),\sigma)$ of $(\F,\sigma)$ is
given by the rational function field $\F(t)$ in the indeterminate $t$
over $\F$ and an extension of $\sigma$ to $\F(t)$ which can be built as follows:
either
\begin{enumerate}
\item $\sigma(t) = t + \beta$ for some $\beta \in \F \setminus \{0\}$
  (a $\Sigma$-monomial) or
\item $\sigma(t) = \alpha t$ for some $\alpha \in \F \setminus \{0\}$
  (a $\Pi$-monomial)
\end{enumerate}
where in both cases we require that $\const \F(t) = \const \F$. More generally, we consider a tower $(\F(t_1)\dots(t_e),\sigma)$ of such extensions where the $t_i$ are either $\Pi$-monomials or $\Sigma$-monomials adjoined to the field $\F(t_1)\dots(t_{i-1})$ below. Such a construction is also called a \pisiSE-extension $(\F(t_1)\dots(t_e),\sigma)$ of $(\F,\sigma)$. If $\F(t_1)\dots(t_e)$ consists only of $\Pi$-monomials or of $\Sigma$-monomials, it is also called a $\Pi$- or a $\Sigma$-extension. If $\F=\const \F$, $(\F(t_1)\dots(t_e),\sigma)$ is called a \pisiSE-field over $\F$.

Note that all difference fields from Example~\ref{Exp:BaseField} are \pisiSE-fields over $\K$. Further note that \pisiSE-extensions enable one to model indefinite nested sums and products that may arise as rational expressions in the numerator and denominator. See \cite{Karr81} or \cite{Schneider:13a} for examples of how that modelling works.

\medskip

Let $(\F,\sigma)$ be an arbitrary difference field and $(\F(t),\sigma)$ be a \pisiSE-extension of $(\F,\sigma)$. In this work, we take a look at systems of the form
\begin{equation}
  \label{eq:system}
  A_\ell \sigma^\ell y + \ldots + A_1 \sigma y + A_0 y = b
\end{equation}
where $A_0,\ldots,A_\ell \in \Mat{\F[t]}mn$ and $b \in
\CV{\F[t]}m$. Our long-term goal is to find all rational solutions for such
a system, \ie, rational vectors $y \in \CV{\F(t)}n$ which
satisfy~\eqref{eq:system} following the three steps presented in the introduction.
In this article we will look at the first step: compute a so-called
\emph{denominator bound} (\aka\ a \emph{universal denominator}). This
is a polynomial $d \in \F[t]\setminus\{0\}$ such that $d y \in \CV{\F[t]}n$ is
polynomial for all solutions $y$ of~\eqref{eq:system}. Once that is
done, we can simply substitute the denominator bound into the system
and then it will be sufficient to search for polynomial solutions. 
In future work, it will be a key challenge to derive such degree bounds; compare the existing results~\cite{Karr81,Schneider:01,Schneider:05b} for scalar equations. Degree bounds for the rational case ($l=1$) and the $q$-rational case ($l$ arbitrarily) applied to the system~\eqref{eq:system} can be found in~\cite{AbramovBarkatou1998} and~\cite{Middeke:17}, respectively. Once a
degree bound for the polynomial solutions is known, 
the latter problem translates to solving linear systems over $\F$ if $\F=\const\F$. Otherwise, one can apply similar strategies as worked out in~\cite{Karr81,Bron:00,Schneider:05a} to reduce the problem to find polynomial solutions to the problem to solve coupled systems in the smaller field $\F$.  Further comments on this proposed machinery will be given in the conclusion.

In order to derive our denominator bounds for system~\eqref{eq:system}, we rely heavily on the following concept~\cite{Abramov:71,Bron:00}. Let $a,b \in \F[t]\setminus\{0\}$ be two non-zero polynomials. We
define the \emph{spread} of $a$ and $b$ as
\begin{equation*}
  \spread(a,b) = \{ k \geq 0 \mid \gcd(a, \sigma^k(b)) \notin \F \}.
\end{equation*}
In this regard note that $\sigma^k(b)\in\F[t]$ for any $k\in\Z$ and $b\in\F[t]$. In particular, if $b$ is an irreducible polynomial, then also $\sigma^k(b)$ is an irreducible polynomial.\\
The \emph{dispersion} of $a$ and $b$ is defined as the maximum of the
spread, \ie, we declare
\begin{math}
  \disp(a,b) = \max \spread(a,b)
\end{math}
where we use the conventions $\max \emptyset = -\infty$ and $\max S =
\infty$ if $S$ is infinite. As an abbreviation we will sometimes use
$\spread(a) = \spread(a,a)$ and similarly $\disp(a) = \disp(a,a)$.
We call $a\in\F[t]$ periodic if $\disp(a)$ is infinite and aperiodic if $\disp(a)$ is finite.

It is shown in~\cite{Karr81,Bron:00} (see also~\cite[Theorem~2.5.5]{Schneider:01}) that in the case of
\SigmaExt{}s the spread of two polynomials will always be a finite set
(possibly empty). For \PiExt{}s the spread will certainly be infinite
if $t \mid a$ and $t \mid b$ as $\sigma^k(t) \mid t$ for all $k$. It
can be shown in~\cite{Karr81,Bron:00} (see also~\cite[Theorem~2.5.5]{Schneider:01}), however, that this is the only problematic
case. Summarising, the following property holds.

\begin{lemma}\label{Lemma:PeriodicChar}
Let $(\F(t),\sigma)$ be a \pisiSE-extension of $(\F,\sigma)$ and $a\in\F[t]\setminus\{0\}$. Then $a$ is periodic if and only if $t$ is a $\Pi$-monomial and $t\mid a$.
\end{lemma}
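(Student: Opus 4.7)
My plan is to prove the two directions separately. The ($\Leftarrow$) direction is a short direct calculation, while ($\Rightarrow$) is a straightforward specialisation of the two finiteness-of-spread results that the excerpt quotes immediately before the lemma.

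For ($\Leftarrow$), I assume $t$ is a $\Pi$-monomial with $\sigma(t)=\alpha t$, $\alpha\in\F\setminus\{0\}$, and $t\mid a$. A quick induction on $k$ gives $\sigma^{k}(t)=\alpha_{k}t$ with $\alpha_{k}=\prod_{i=0}^{k-1}\sigma^{i}(\alpha)\in\F\setminus\{0\}$, so $t\mid\sigma^{k}(a)$ for every $k\geq 0$. Hence $t\mid\gcd(a,\sigma^{k}(a))$, which places every non-negative integer in $\spread(a,a)$ and forces $\disp(a)=\infty$.

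For ($\Rightarrow$), I would argue by contraposition, splitting on the type of the monomial $t$. If $t$ is a $\Sigma$-monomial, the cited result \cite{Karr81,Bron:00} (see also \cite[Theorem~2.5.5]{Schneider:01}) already gives that $\spread(a,a)$ is finite for every $a\in\F[t]\setminus\{0\}$, so $a$ is aperiodic. If $t$ is a $\Pi$-monomial but $t\nmid a$, the same cited results state that the only way $\spread(a,b)$ can be infinite is $t\mid a$ and $t\mid b$; specialising to $b=a$ yields finiteness of $\spread(a,a)$, so $a$ is again aperiodic. Together with the ($\Leftarrow$) direction, this gives the equivalence.

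The main obstacle, were one to replace the citations by a self-contained argument, is the passage from ``$\spread(a,a)$ is infinite'' to ``$t\mid a$'' (in the $\Pi$-case) or to a contradiction (in the $\Sigma$-case). The usual route runs in two steps. First, a pigeonhole argument on the finitely many monic irreducible factors $p_{1},\ldots,p_{r}$ of $a$ — using that $\sigma^{k}$ preserves irreducibility, as noted just above the lemma — yields an index $j_{0}$, a shift $m\geq 1$, and some $\lambda\in\F\setminus\{0\}$ with $\sigma^{m}(p_{j_{0}})=\lambda\,p_{j_{0}}$. Second, a comparison of the leading and sub-leading coefficients of $\sigma^{m}(p_{j_{0}})$ and $\lambda\,p_{j_{0}}$, combined with the defining $\const\F(t)=\const\F$ condition of a \PSE, either forces the $\Pi$-case conclusion $p_{j_{0}}=c\,t$ for some $c\in\F\setminus\{0\}$ (so $t\mid a$), or produces a telescoping solution in $\F$ that violates the $\Sigma$-monomial condition. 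The delicate part is the second step, because $\sigma^{m}$-invariance does not obviously translate into a $\sigma$-level obstruction; this is precisely where the cited works do the nontrivial work, and is why I would prefer to invoke them rather than reprove them here.
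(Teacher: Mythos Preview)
Your proposal is correct and mirrors the paper's treatment: the paper does not give a separate proof but presents the lemma as a summary of the preceding paragraph, handling the $(\Leftarrow)$ direction by the observation that $\sigma^{k}(t)\mid t$ in the $\Pi$-case and deferring the $(\Rightarrow)$ direction to the same references \cite{Karr81,Bron:00} and \cite[Theorem~2.5.5]{Schneider:01} that you invoke. Your additional sketch of how one would make the cited step self-contained (pigeonhole on irreducible factors, then a coefficient comparison exploiting $\const\F(t)=\const\F$) goes beyond what the paper provides and is a reasonable outline of the underlying argument.
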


\noindent This motives the following definition.

\begin{definition}
Let $(\F(t),\sigma)$ be a \pisiSE-extension of $(\F,\sigma)$ and $a\in\F[t]\setminus\{0\}$. We define the periodic part of $a$ as 
$$\per(a)=\begin{cases} 1&\text{ if $t$ is a $\Sigma$-monomial},\\
               t^m& \text{ if $t$ is a $\Pi$-monomial and $m\in\N$ is maximal s.t.\ $t^m\mid a$}
              \end{cases}.$$
and the aperiodic part as $\aperiodic(a)=\frac{a}{per(a)}$.
\end{definition} 
Note that $\aperiodic(a)=a$ if $t$ is a $\Sigma$-monomial.  In this
article we will focus on the problem to compute the aperiodic part of a
denominator bound $d$ of the system~\eqref{eq:system}. Before we state
our main result, we will have to clarify what me mean by the
denominator of a vector.

\begin{definition}\label{def:reduced}
  Let $y \in \CV{\F(t_1,\ldots,t_e)}n$ be a rational column vector. We
  say that $y = d^{-1} z$ is a \emph{reduced representation} for $y$
  if $d \in \F[t_1,\ldots,t_e]\setminus\{0\}$ and $z=(z_1,\dots,z_n)\in \CV{\F[t_1,\ldots,t_e]}n$
  are such that\footnote{If $z$ is the zero vector, then the assumption $\gcd(z,d)=1$ implies $d=1$.} $\gcd(z,d) = \gcd(z_1,\ldots,z_n,d) = 1$.
\end{definition}

With all the necessary definitions in place, we are ready to state the
main result. Its proof will take up the remainder of this paper.

\begin{theorem}\label{Thm:GlobalMain}
  Let $(\F(t),\sigma)$ be a \PSE\ of $(\F,\sigma)$ and let
  $A_0,\dots,A_l\in\F[t]^{m\times n}$, $b\in\F[t]^m$. If one can
  compute the dispersion of polynomials in $\F[t]$, then one can
  compute the aperiodic part of a denominator bound
  of~\eqref{eq:system}. This means that one can compute a
  $d\in\F[t]\setminus\{0\}$ such that for any solution $q^{-1} p
  \in\CV{\F(t)}n$ of~\eqref{eq:system} with $q^{-1} p$ being in
  reduced representation we have $\aperiodic(q)\mid d$.
\end{theorem}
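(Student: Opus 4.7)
The overall strategy is to follow the programme outlined in the paper: first bring the system into a regularised form in which the leading and trailing coefficient matrices are well-behaved, and then derive a compact formula for the aperiodic denominator bound from this regularised system.

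The first step is to reinterpret~\eqref{eq:system} as a matrix equation $P\,y=b$ over the Ore polynomial ring $\F(t)[\sigma]$, with $P=A_\ell\sigma^\ell+\cdots+A_0\in\Mat{\F(t)[\sigma]}{m}{n}$. Any left multiplication of the augmented matrix $(P\mid b)$ by an invertible matrix $U$ over $\F(t)[\sigma]$ produces an equivalent system $UP\,y=Ub$ with the same set of rational solutions. I would use the row reduction procedure to be developed in Section~\ref{sec:row-red} and its application in Section~\ref{sec:regularise} to replace $P$ by a matrix $\tilde P$ which is \emph{regularised} in the sense required by the main lemmas of Section~\ref{Sec:DenominatorBounds}: essentially, the row--leading coefficient matrix (and, after a symmetric reduction ``from the other end'', the row--trailing coefficient matrix) becomes a square, nonsingular block over $\F(t)$, up to a trivially solvable part of the system that can be split off.

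Once the system is in this regularised form, say $\tilde A_{\tilde\ell}\sigma^{\tilde\ell}y+\cdots+\tilde A_0 y=\tilde b$, the second step is an argument in the spirit of~\cite{Barkatou1999,CPS:08}, transplanted into the \PSE\ setting. Let $y=q^{-1}p$ be a reduced solution and let $u\in\F[t]$ be an irreducible aperiodic factor of $q$; by Lemma~\ref{Lemma:PeriodicChar} this forces $u\neq t$ whenever $t$ is a $\Pi$-monomial. For each $i$, $\sigma^{-i}(u)$ is again an irreducible polynomial of $\F[t]$ and it appears in the denominator of $\sigma^i y$. Substituting the reduced representation into the regularised system and clearing denominators, the poles of the shifted vectors $\sigma^i y$ must cancel on the left hand side. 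Because $\tilde A_{\tilde\ell}$ is nonsingular and $u$ is aperiodic, this cancellation forces $u$ to divide a particular shift of $\det\tilde A_{\tilde\ell}$; a symmetric argument applied to the trailing term forces $u$ to divide a shift of $\det\tilde A_0$. The finiteness of spreads provided by Lemma~\ref{Lemma:PeriodicChar}, combined with the assumed dispersion oracle, turns each such ``some shift'' statement into an explicit finite bound: the admissible shift indices lie in a computable interval controlled by a dispersion like $\disp(\det\tilde A_{\tilde\ell},\det\tilde A_0)$, and the multiplicities of $u$ in $q$ can be bounded in the same vein. The product of all these candidate factors, raised to their bounded multiplicities, yields an explicit polynomial $d\in\F[t]\setminus\{0\}$ with $\aperiodic(q)\mid d$.

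The main obstacle lies in the regularisation itself. Row reduction over $\F(t)[\sigma]$ has to be performed in a way that (i)~is effective under the difference field hypothesis, (ii)~preserves the rational solution set of the inhomogeneous system, and (iii)~produces a matrix whose leading and trailing coefficients behave as required even when the original system is rectangular or exhibits rank drops. A secondary technical point is to verify that the bound derived from $\tilde P$ really bounds denominators of solutions of the original $P$; this is clean precisely because $U$ is invertible over $\F(t)[\sigma]$, so solutions of the two systems are in bijection and share the same denominators. Once regularisation is secured, the denominator estimate itself is a direct generalisation of the scalar arguments from~\cite{Bron:00,Schneider:04b}, and the assumption that dispersions in $\F[t]$ are computable guarantees that every ingredient of the final formula for $d$ is effectively constructible.
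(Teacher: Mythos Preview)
Your proposal is correct and follows essentially the same approach as the paper: regularise the system via row/column reduction (Sections~\ref{sec:row-red} and~\ref{sec:regularise}) to obtain a fully regular related system, then read off the aperiodic denominator bound from the leading and trailing coefficient matrices via a dispersion computation (Theorem~\ref{thm:main}, established through Lemmas~\ref{lem:disp} and~\ref{lem:bound}). One small correction: the transformation relating the head regular and tail regular forms must be taken over the Laurent ring $\OLP$ rather than $\Ore$, since the second reduction is carried out with respect to $\sigma^{-1}$; and the paper handles multiplicities not factor-by-factor but via an iterative shifting of the system (the proof of Lemma~\ref{lem:bound}) that directly yields the product bound $\prod_{j=0}^D \sigma^{-\ell-j}(\aperiodic(m))$.
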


Note that such a $d$ in Theorem~\ref{Thm:GlobalMain}  forms a complete denominator bound if $t$ is a $\Sigma$-monomial. Otherwise, if $t$ is a $\Pi$-monomial, there exists an $m\in\N$ such that $t^m\,d$ is a denominator bound. Finding such an $m$ algorithmically in the general \pisiSE-case is so far an open problem. For the $q$-rational case we refer to~\cite{Middeke:17}.

In order to prove Theorem~\ref{Thm:GlobalMain}, we will perform a preprocessing step and regularise the system~\eqref{eq:system} to a more suitable form (see Theorem~\ref{thm:regularise} in Section~\ref{sec:regularise}); for similar strategies to accomplish such a regularisation see~\cite{AbramovKhmelnov2012,AbramovBarkatou2014}. Afterwards, we will apply Theorem~\ref{thm:main} in Section~\ref{Sec:DenominatorBounds} which is a generalisation of~\cite{Barkatou1999,CPS:08}. Namely, besides computing the  dispersion in $\F[t]$ one only has to compute certain $\sigma$- and $\gcd$-computations in $\F[t]$ in order to derive the desired aperiodic part of the universal denominator bound. 

Summarising, our proposed denominator bound method is applicable if the dispersion can be computed. To this end, we will elaborate under which assumptions the dispersion can be computed in $\F[t]$. Define for $f\in\F\setminus\{0\}$ and $k\in\Z$ the following functions:
\small
\begin{equation*}
f_{(k,\sigma)}:=\begin{cases}
f\sigma(f)\dots\sigma^{k-1}(f)&\text{if }k>0\\
1&\text{if }k=0\\
\frac{1}{\sigma^{-1}(f)\dots\sigma^{-k}(f)}&\text{if }k<0,
\end{cases}\quad
f_{\{k,\sigma\}}:=\begin{cases}
f_{(0,\sigma)}+f_{(1,\sigma)}+\dots+f_{(k-1,\sigma)}&\text{if }k>0\\
0&\text{if }k=0\\
-(f_{(-1,\sigma)}+\dots+f_{(k,\sigma)})&\text{if }k<0.
\end{cases}
\end{equation*}
\normalsize
Then analysing Karr's algorithm~\cite{Karr81} one can extract the following (algorithmic) properties that are relevant to calculate the dispersion in \pisiSE-extensions; compare~\cite{KS:06}. 

\begin{definition}\label{Def:SigmaCompDF}
$(\F,\sigma)$ is weakly $\sigma^*$-computable if the following holds.

\vspace*{-0.2cm}

\begin{enumerate}
\item There is an algorithm that factors multivariate polynomials
over $\F$ and that solves linear systems with multivariate rational functions over $\F$.

\item $(\F,\sigma^r)$ is torsion free for all $r\in\Z$, 
i.e., for all $r\in\Z$, for all
$k\in\Z\setminus\{0\}$ and all $g\in\F\setminus\{0\}$ the equality
$\big(\frac{\sigma^r(g)}{g}\big)^k=1$ implies $\frac{\sigma^r(g)}{g}=1$.

\item {\it $\Pi$-Regularity.} Given $f,g\in\F$ with $f$ not a root of
unity, there is at most one $n\in\Z$ such that $f_{(n,\sigma)}=g$.
There is an algorithm that finds, if possible, this $n$.

\item {\it $\Sigma$-Regularity.} Given $k\in\Z\setminus\{0\}$ and $f,g\in\F$ with $f=1$ or $f$ not a root of unity,
there is at most one $n\in\Z$ such that $f_{\{n,\sigma^k\}}=g$.
There is an algorithm that finds, if possible, this $n$.
\end{enumerate}
\end{definition}

\noindent Namely, we get the following result based on Karr's reduction algorithms.

\begin{lemma}\label{Lemma:LiftComputable}
Let $(\F(t),\sigma)$ be a \pisiSE-extension of $(\F,\sigma)$. Then the following holds.
\begin{enumerate}
 \item If $(\F,\sigma)$ is weakly $\sigma^*$-computable, one can compute the spread and dispersion of two polynomials $a,b\in\F[t]\setminus\F$.
 \item If $(\F,\sigma)$ is weakly $\sigma^*$-computable, $(\F(t),\sigma)$ is weakly $\sigma^*$-computable.
\end{enumerate}
\end{lemma}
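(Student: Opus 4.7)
The plan is to handle both parts by reducing to problems in the base field $(\F,\sigma)$, where weak $\sigma^*$-computability is available by hypothesis.

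For part~1, I would first use the factorization algorithm over $\F$ (condition~1 of Definition~\ref{Def:SigmaCompDF}) to factor $a$ and $b$ into monic irreducibles in $\F[t]$. Since $\sigma$ sends irreducibles to irreducibles, one has $\spread(a,b)=\bigcup_{i,j}\spread(p_i,q_j)$ taken over the monic irreducible factors $p_i$ of $a$ and $q_j$ of $b$, so the problem reduces to computing $\spread(p,q)$ for monic irreducible $p,q\in\F[t]\setminus\F$. Here $\gcd(p,\sigma^k(q))\notin\F$ is equivalent to $\sigma^k(q)=\lambda p$ for some $\lambda\in\F^*$. Expanding $\sigma^k$ using either $\sigma^k(t)=\alpha_{(k,\sigma)}\,t$ (for a $\Pi$-monomial) or $\sigma^k(t)=t+\beta_{\{k,\sigma\}}$ (for a $\Sigma$-monomial) and comparing coefficients on both sides yields a finite system of equations in $\F$ parametrised by $k$, where the key constraint on $k$ has the form $\alpha_{(k,\sigma)}=c$ or $\beta_{\{k,\sigma\}}=c$ for an explicit $c\in\F$ determined by the coefficients of $p$ and $q$. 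Such equations have at most one solution $k$ by $\Pi$- or $\Sigma$-regularity, and this $k$ is found algorithmically; the remaining equations are then checked directly. The exceptional case $p=q=t$ in the $\Pi$-monomial setting yields $\spread=\Z_{\geq 0}$ and $\disp=\infty$, matching Lemma~\ref{Lemma:PeriodicChar}.

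For part~2, I would verify the four conditions of Definition~\ref{Def:SigmaCompDF} for $(\F(t),\sigma)$. Condition~1 lifts from $\F$ by standard techniques: clear denominators, then run multivariate factorization and linear algebra over $\F$. For conditions~2--4, the key observation is that every nonzero $g\in\F(t)$ factors uniquely as $g=c\prod_i p_i^{e_i}$ with $c\in\F^*$ and pairwise distinct monic irreducibles $p_i\in\F[t]$, and $\sigma$ permutes the monic irreducibles in orbits whose structure can be read off from the spread computation of part~1. An equation such as $f_{(n,\sigma)}=g$ or $f_{\{n,\sigma^k\}}=g$ then decouples into a system of integer constraints capturing how $n$ must align the orbits of the irreducible factors of $f$ with those of $g$, plus a residual scalar equation in $\F$ coming from the constant parts and from the normalisations introduced by $\sigma^k(t)=\alpha_{(k,\sigma)}\,t$. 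The orbit constraints yield at most one candidate $n$, and the residual $\F$-equation is solved (or its lack of solution, or torsion-freeness, deduced) using the corresponding property of $(\F,\sigma)$.

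The main obstacle I expect is the orbit bookkeeping in part~2 together with the treatment of periodic orbits. When $t$ is a $\Pi$-monomial, the orbit of $t$ is a singleton, so a factor $t^{e_0}$ in $g$ does not produce a shift constraint but instead feeds an extra $\alpha_{(n,\sigma)}^{e_0}$ contribution into the constant part of $f_{(n,\sigma)}$. Coupling this periodic contribution correctly with the scalar regularity equation in $\F$, while preserving uniqueness of $n$ and correctly recognising when no solution exists, is the delicate technical core of the argument.
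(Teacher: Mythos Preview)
The paper does not prove this lemma from scratch: its proof consists entirely of citations, deferring part~(1) to Lemma~1 of \cite{Schneider:04b} together with Corollary~1 of \cite{KS:06}, and part~(2) to Theorem~1 of \cite{KS:06}, all ultimately resting on Karr~\cite{Karr81}. Your proposal is therefore not a comparison target so much as an attempt to reconstruct the content of those references, and in broad outline it matches what happens there: reduce spread to shift equivalence of monic irreducibles, then lift the four computability conditions via the orbit structure of irreducibles under~$\sigma$.

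There is, however, one substantive oversimplification in your sketch of part~(1). When you compare coefficients in $\sigma^k(q)=\lambda p$, the coefficients $q_i\in\F$ are themselves moved by $\sigma^k$, so (for instance in the $\Sigma$-case at degree $d-1$) the resulting equation reads $\sigma^k(q_{d-1})+d\cdot\bigl(\beta+\sigma(\beta)+\cdots+\sigma^{k-1}(\beta)\bigr)=p_{d-1}$ rather than a pure constraint $\beta_{\{k,\sigma\}}=c$ with fixed $c\in\F$. This is a genuine shift-equivalence problem in $\F$ coupled to the regularity constraint, not something resolved by $\Sigma$-regularity alone; untangling the two is precisely the recursive content of \cite{KS:06,Karr81}. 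The same coupling reappears in your part~(2) plan: the ``residual scalar equation in $\F$'' still depends on $n$ through $\sigma^n$ acting on elements of $\F$, so the decoupling into orbit constraints plus a single $\F$-equation is not as clean as you suggest. Your identification of the periodic $\Pi$-orbit as the delicate point is apt, but the coupling issue already arises in the aperiodic case and deserves equal attention.
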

\begin{proof}
(1) By Lemma~1 of~\cite{Schneider:04b} the spread is computable if the shift equivalence problem is solvable. This is possible if $(\F,\sigma)$ is weakly $\sigma^*$-computable; see Corollary~1 of~\cite{KS:06} (using heavily results of~\cite{Karr81}).\\
(2) holds by Theorem~1 of~\cite{KS:06}.\qed
\end{proof}

\noindent Thus by the iterative application of Lemma~\ref{Lemma:LiftComputable} we end up at the following result that supplements our Theorem~\ref{Thm:GlobalMain}.

\begin{corollary}\label{Cor:DispersionInExtension}
Let $(\F(t),\sigma)$ be a \pisiSE-extension of $(\F,\sigma)$ where $(\F,\sigma)$ with $\F=\G(t_1)\dots(t_e)$ is a \pisiSE-extension of a weakly $\sigma^*$-computable difference field $(\G,\sigma)$. Then the dispersion of two polynomials $a,b\in\F[t]\setminus\F$ is computable.
\end{corollary}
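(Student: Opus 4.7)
The plan is a straightforward induction on the length $e$ of the tower, applying Lemma~\ref{Lemma:LiftComputable} at each step. First I would use part~(2) of Lemma~\ref{Lemma:LiftComputable} to lift weak $\sigma^*$-computability from the ground field $(\G,\sigma)$ up through the chain of \pisiSE{}-extensions $(\G(t_1),\sigma)$, $(\G(t_1)(t_2),\sigma)$, \ldots, $(\G(t_1)\dots(t_e),\sigma)=(\F,\sigma)$. A simple induction on $e$ (with base case $e=0$, where $\F=\G$ is weakly $\sigma^*$-computable by hypothesis, and inductive step given by part~(2) of the lemma applied to each new monomial $t_i$) then shows that $(\F,\sigma)$ itself is weakly $\sigma^*$-computable.

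Once $(\F,\sigma)$ is established as weakly $\sigma^*$-computable, I would apply part~(1) of Lemma~\ref{Lemma:LiftComputable} to the outer \pisiSE{}-extension $(\F(t),\sigma)$ to conclude that the spread, and hence the dispersion, of two polynomials $a,b\in\F[t]\setminus\F$ can be computed. There is no real obstacle here: all the nontrivial content — the reduction of dispersion computations to the shift-equivalence problem and the preservation of $\sigma^*$-computability under \pisiSE{}-extensions — has already been packaged into the two parts of Lemma~\ref{Lemma:LiftComputable}. What remains is simply to thread these parts together along the given tower, so the proof amounts to spelling out the inductive step and invoking the two parts of the lemma in the correct order.
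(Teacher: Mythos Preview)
Your proposal is correct and matches the paper's approach exactly: the paper simply states that the corollary follows by the iterative application of Lemma~\ref{Lemma:LiftComputable}, which is precisely the induction on $e$ you describe (part~(2) to lift weak $\sigma^*$-computability along the tower, then part~(1) at the top).
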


\noindent Finally, we list some difference fields $(\G,\sigma)$ that one may choose for Corollary~\ref{Cor:DispersionInExtension}. Namely, the following ground fields $(\G,\sigma)$ are weakly $\sigma^*$-computable.
\begin{enumerate}
 \item By~\cite{Schneider:05c} we may choose $\const{\G}=\G$ where $\G$ is a rational function field over an algebraic number field; note that $(\F,\sigma)$ is a \pisiSE-field over $\G$.
 \item By~\cite{KS:06} $(\G,\sigma)$ can be a free difference field over a constant field that is weakly $\sigma^*$-computable (see item 1).
 \item By~\cite{Schneider:07f} $(\G,\sigma)$ can be radical difference field over a constant field that is weakly $\sigma$-computable (see item 1).
 \end{enumerate}

\noindent Note that all the difference fields introduced in Example~\ref{Exp:BaseField} are \pisiSE-fields which are weakly $\sigma^*$-computable if the constant field $\K$ is a rational function field over an algebraic number field (see item~1 in the previous paragraph) and thus the dispersion can be computed in such fields. For the difference fields given in Example~\ref{Exp:BaseField} one may also use the optimised algorithms worked out in~\cite{BauerPetkovsek1999}. 

Using Theorem~\ref{Thm:GlobalMain} we obtain immediately the following multivariate case in the setting of \pisiSE-extensions which can be applied for instance for the multibasic and mixed multibasic rational difference fields defined in Example~\ref{Exp:BaseField}.

\begin{corollary}\label{Cor:PiSigmaNested}
Let $(\EE,\sigma)$ be a \pisiSE-extension of $(\F,\sigma)$ with $\EE=\F(t_1)(t_2)\dots(t_e)$
where $\sigma(t_i)=\alpha_i\,t_i+\beta_i$ ($\alpha_i\in\F^*$, $\beta_i\in\F$) for $1\leq i\leq e$. Let $A_0,\dots,A_l\in\EE^{m\times n}$, $b\in\EE$. Then there is an algorithm that computes a $d\in\F[t_1,\dots,t_e,t]\setminus\{0\}$ such that $d':=t_1^{m_1}\dots t_e^{m_e}\,d$ is a universal denominator bound of system~\eqref{eq:system}
for some $m_1,\dots,m_e\in\N$ where $m_i=0$ if $t_i$ is a $\Sigma$-monomial. That is, for any solution $y=q^{-1} p \in\CV{\F}n$ of~\eqref{eq:system} in reduced representation we have that $q\mid d'$.
\end{corollary}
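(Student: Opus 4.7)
The plan is to apply Theorem~\ref{Thm:GlobalMain} once per generator $t_i$ and then to glue the resulting one-variable bounds into a single polynomial in $\F[t_1,\ldots,t_e]$. The decisive structural observation is that $\sigma(t_i) = \alpha_i t_i + \beta_i$ with $\alpha_i,\beta_i \in \F$ makes the action on each $t_i$ independent of the other generators, so the tower can be reordered arbitrarily: for every permutation $\pi$ of $\{1,\ldots,e\}$, the field $\F(t_{\pi(1)})\ldots(t_{\pi(e)}) = \EE$ is again a \PSE\ of $\F$ with the same $\Pi$- or $\Sigma$-type for each generator and the same constant field. Hence for every fixed $i$ I may view $\EE = \EE_i(t_i)$ as a \PSE\ of $\EE_i$, where $\EE_i$ denotes the subfield of $\EE$ generated over $\F$ by the $t_j$ with $j \neq i$.

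For each $i$, I would first clear the $t_i$-denominators of $A_0,\ldots,A_l$ and $b$ (preserving the solution set) so that the system has entries in $\EE_i[t_i]$, and then invoke Theorem~\ref{Thm:GlobalMain} over the \PSE\ $(\EE_i(t_i),\sigma)$ of $(\EE_i,\sigma)$ to obtain an aperiodic bound $d^{(i)} \in \EE_i[t_i]\setminus\{0\}$ for the $t_i$-denominator. Clearing the residual denominators in the $t_j$ with $j \neq i$ yields $p^{(i)} \in \F[t_1,\ldots,t_e]\setminus\{0\}$ with $d^{(i)} \mid p^{(i)}$ in $\EE_i[t_i]$. The candidate is
\begin{equation*}
  d := \lcm\bigl(p^{(1)},\ldots,p^{(e)}\bigr) \in \F[t_1,\ldots,t_e]\setminus\{0\}.
\end{equation*}

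To verify the universal property, I would fix any solution $y = q^{-1} p$ in reduced representation over $\F[t_1,\ldots,t_e]$ and factor $q = \prod_j f_j^{e_j}$ into pairwise non-associate irreducibles in that UFD. Each $f_j$ is either an associate of some $t_i$ with $t_i$ a $\Pi$-monomial (and gets absorbed into the prefix $t_1^{m_1}\ldots t_e^{m_e}$) or it involves some variable $t_i$ while being non-associate to $t_i$ in $\F[t_1,\ldots,t_e]$, in which case $f_j$ is aperiodic with respect to $t_i$ by Lemma~\ref{Lemma:PeriodicChar}. For such an $f_j$, a Gauss-type argument identifies $f_j^{e_j}$ as an irreducible factor of the $t_i$-reduced denominator of $y$ in $\EE_i[t_i]$, and Theorem~\ref{Thm:GlobalMain} applied to the reordering with $t_i$ on top then yields $f_j^{e_j} \mid d^{(i)}$ in $\EE_i[t_i]$. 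Since $f_j$ genuinely involves $t_i$ while the denominators cleared in passing from $d^{(i)}$ to $p^{(i)}$ lie in $\EE_i$ and hence are coprime to $f_j$ in $\F[t_1,\ldots,t_e]$, this promotes to $f_j^{e_j} \mid p^{(i)} \mid d$ in $\F[t_1,\ldots,t_e]$. Multiplying the divisibilities over the pairwise non-associate factors gives $\tilde q \mid d$ for the aperiodic part $\tilde q = q / (t_1^{m_1}\ldots t_e^{m_e})$ of $q$.

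I expect two technical points. First, transferring divisibility from $\EE_i[t_i]$ back to $\F[t_1,\ldots,t_e]$ requires a careful content argument for each irreducible factor, but is routine. Second, the statement requires periodic exponents $m_i$ \emph{uniform} in the solution, which the construction above does not itself produce; this uniformity can be supplied from the standard fact that the rational solution space of~\eqref{eq:system} is finite-dimensional over $\const\F$, so that taking $m_i$ as the maximum over a basis suffices. The algorithm itself needs only to output the computable aperiodic piece $d$.
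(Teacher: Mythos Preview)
Your approach is essentially the same as the paper's: reorder the tower so each $t_i$ sits on top, apply Theorem~\ref{Thm:GlobalMain} once per generator, clear denominators into $\F[t_1,\ldots,t_e]$, take the lcm, and then argue factor by factor over the irreducibles of $q$. Your write-up is in fact slightly more careful than the paper's---you track the full multiplicity $f_j^{e_j}$ rather than a single irreducible factor, and you explicitly flag the Gauss/content step and the finite-dimensionality argument for the uniform periodic exponents $m_i$, both of which the paper leaves implicit.
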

\begin{proof}
Note that one can reorder the generators in $\EE=\F(t_1,\dots,t_e)$ without changing the constant field $\const\EE=\const\F$. Hence for any $i$ with $1\leq i\leq e$,
$(\A_i(t_i),\sigma)$ is a \pisiSE-extension of $(\A_i,\sigma)$ with $\A_i=\F(t_1)\dots(t_{i-1})(t_{i+1})\dots(t_e)$.
Thus for each $i$ with $1\leq i\leq e$,
we can apply Theorem~\ref{Thm:GlobalMain} (more precisely, Theorems~\ref{thm:regularise} and~\ref{thm:main} below) to compute the aperiodic part $d_i\in\A_i[t_i]\setminus\{0\}$ of a denominator bound of~\eqref{eq:system}. 
W.l.o.g.\ we may suppose that $d_1,\dots,d_e\in\A:=\F[t_1,\dots,t_e]$; otherwise, one clears denominators: for $d_i$ one uses a factor of $\A_i$). Finally, compute $d:=\lcm(d_1,\dots,d_e)$ in $\A$. Suppose that $d\,t_1^{m_1}\dots t_e^{m_e}$ is not a denominator bound for any choice $m_1,\dots,m_e\in\N$ where for $1\leq i\leq e$, $m_i=0$ if $t_i$ is a $\Sigma$-monomial. 
Then we find a solution $y=q^{-1} p$ of~\eqref{eq:system}
in reduced representation with $p \in \CV{\A}n$ and $q\in\A$ and an irreducible
$h\in\A\setminus\F$ with $h\mid q$ and $h\nmid d$ where
$h\neq t_i$ for all $i$ where $t_i$ is a $\Pi$-monomial. Let $j$ with $1\leq j\leq e$ such that $h\in\A_j[t_j]\setminus\A_j$. Since $d_j$ is the aperiodic part of a denominator bound \wrt\ $t_j$, 
and the case $h=t_j$ is excluded if $t_j$ is a $\Pi$-monomial, it follows that $h\,w=d_j$ for some $w\in\A_j[t_j]$.
Write $w=\frac{u}{v}$ with $u\in\A$ and $v\in\A_j$.
Since $d_j\in\A$, $h\,w\in\A$ and thus the factor $v\in\A$ must be contained in $h\in\A$. But since $h$ is irreducible in $\A$, $v\in\F\setminus\{0\}$ and thus $w\in\A$. Hence $h$ divides $d_j$ and thus it divides also $d=\lcm(d_1,\dots,d_e)$ in $\A$, a contradiction.\qed
\end{proof}

\section{Operators, Ore Polynomials, and Matrices}\label{Sec:OrePolynmials}

For this section, let $(\F, \sigma)$ be a fixed difference field. An
alternative way of expressing the system~\eqref{eq:system} is to use
operator notation. Formally, we consider the ring of \emph{Ore
  polynomials} $\Ore$ over the rational functions $\F(t)$ \wrt\ the
automorphism $\sigma$ and the trivial
$\sigma$-derivation\footnote{Some authors would denote $\Ore$ by the
  more precise $\F(t)[\sigma;\sigma,0]$.}. Ore polynomials are named
after {\O}ystein Ore who first described them in his paper
\cite{Ore33}. They provide a natural algebraic model for linear
differential, difference, recurrence of $q$-difference operators (see,
\eg, \cite{Ore33}, \cite{BP96}, \cite{ChyzakSalvy98},
\cite{AbramovLeLi2005} and the references therein).

\medskip

We briefly recall the definition of Ore polynomials and refer to the
aforementioned papers for details: As a set they consist of all
polynomial expressions
\begin{equation*}
  a_\nu \sigma^\nu + \ldots + a_1 \sigma + a_0
\end{equation*}
with coefficients in $\F(t)$ where we regard $\sigma$ as a
variable\footnote{A more rigorous way would be to introduce a new
  symbol for the variable. However, a lot of authors simply use the
  same symbol and we decided to join them.}. Addition of Ore
polynomials works just as for regular polynomials. Multiplication on
the other hand is governed by the \emph{commutation rule}
\begin{equation*}
  \sigma\cdot a = \sigma(a) \cdot \sigma
  \qqforall
  a \in \F(t).
\end{equation*}
(Note that in the above equation $\sigma$ appears in two different
roles: as the Ore variable and as automorphism applied to $a$.)  Using
the associative and distributive law, this rule lets us compute
products of arbitrary Ore polynomials. It is possible to show that
this multiplication is well-defined and that $\Ore$ is a
(non-commutative) ring (with unity).

For an operator $L = a_\nu \sigma^\nu + \ldots + a_0 \in
\Ore$ we declare the \emph{application} of $L$ to a rational
function $\alpha \in \F(t)$ to be
\begin{equation*}
  L(\alpha) 
  = a_\nu \sigma^\nu(\alpha) + \ldots + a_1 \sigma(\alpha) + a_0 \alpha.
\end{equation*}
Note that this turns $\F(t)$ into a left $\Ore$-module. We extend this
to matrices of operators by setting
\begin{math}
  L(\alpha) = \bigl(\sum_{j=1}^n L_{ij}(\alpha_j)\bigr)_j
\end{math}
for a matrix $L = (L_{ij})_{ij} \in \Mat{\Ore}mn$ and a
vector of rational functions $\alpha = (\alpha_j)_j \in
\CV{\Ore}n$. It is easy to see that the action of
$\Mat{\Ore}mn$ on $\CV{\F(t)}n$ is linear over $\const \F$.
With this notation, we can express the system~\eqref{eq:system} simply
as $A(y) = b$ where
\begin{equation*}
  A = A_\ell \sigma^\ell + \ldots + A_1 \sigma + A_0 
  \in \Mat{\Ore}mn.
\end{equation*}

The powers of $\sigma$ form a (left and right) Ore set within $\Ore$
(see, \eg, \cite[Chapter~5]{CohnRT} for a definition and a brief
description of localisation over non-commutative rings). Thus, we may
localise by $\sigma$ obtaining the \emph{Ore Laurent polynomials}
$\OLP$. We can extend the action of $\Ore$ on $\F(t)$ to $\OLP$ in the
obvious way.

\medskip

We need to introduce some notation and naming conventions. We denote
the $n$-by-$n$ identity matrix by $\ID[n]$ (or simply $\ID$ if the
size is clear from the context). Similarly $\ZERO[m,n]$ (or just
$\ZERO$) denotes the $m$-by-$n$ zero matrix. With
$\diag(a_1,\ldots,a_n)$ we mean a diagonal $n$-by-$n$ matrix with the
entries of the main diagonal being $a_1,\ldots,a_n$.

We say that a matrix or a vector is \emph{polynomial} if all its
entries are polynomials in $\F[t]$; we call it \emph{rational} if its
entries are fractions of polynomials; and we speak of \emph{operator}
matrices if its entries are Ore or Ore Laurent polynomials.

Let $M$ be a square matrix over $\F[t]$ (or $\Ore$ or $\OLP$). We say
that $M$ is \emph{unimodular} if $M$ possesses a (two-sided) inverse
over $\F[t]$ (or $\Ore$ or $\OLP$, respectively). We call $M$
\emph{regular}, if its rows are linearly independent over $\F[t]$ (or
$\Ore$ or $\OLP$, respectively) and \emph{singular} if they are not
linearly independent. For the special case of a polynomial matrix $M
\in \Mat{\F[t]}nn$, we can characterise these concepts using
determinants\footnote{The other two rings do not admit determinants
  since they lack commutativity.}: here, $M$ is singular if $\det M =
0$; regular if $\det M \neq 0$; and unimodular if $\det M \in
\F\setminus\{0\}$. Another equivalent characterisation of regular
polynomial matrices is that they have a rational inverse $M^{-1} \in
\Mat{\F(t)}nn$.

We denote the set of all unimodular polynomial matrices by
$\MatGr{\F[t]}n$ and that of all unimodular operator matrices by
$\MatGr{\Ore}n$ or by $\MatGr{\OLP}n$. We do not have a special
notation for the set of regular matrices.

\begin{remark}\label{rem:ore}
  Let $A \in \Mat{\Ore}mn$ and $b \in \CV{\F(t)}m$. Assume that we are
  given two unimodular operator matrices $P \in \MatGr{\OLP}m$ and $Q
  \in \MatGr{\OLP}n$. Then the system $A(y) = b$ has the solution $y$
  if and only if $(P A Q)(\tilde y) = P(b)$ has the solution $\tilde y
  = Q^{-1}(y)$: Assume first that $A(y) = b$. Then also $P(A(y)) = (P
  A)(y) = P(b)$ and furthermore we have $P(b) = (P A)(y) = (P A)(Q
  Q^{-1}(y)) = (P A Q)(Q^{-1}(y)) = (P A Q)(\tilde y)$. Because $P$
  and $Q$ are unimodular, we can easily go back as well. Thus, we can
  freely switch from one system to the other.
\end{remark}

\begin{definition}\label{def:related}
  We say that the systems $A(y) = b$ and $(P A Q)(\tilde y) = P(b)$ in
  \autoref[Remark]{rem:ore} are \emph{related} to each other.
\end{definition}

\section{Denominator Bounds for Regularised Systems}\label{Sec:DenominatorBounds}

Let $(\F(t), \sigma)$ be again a \pisiSE-extension of $(\F,\sigma)$. Recall that we we
are considering the system~\eqref{eq:system} which has the form
\begin{math}
  A_\ell \sigma^\ell y + \ldots + A_1 \sigma y + A_0 y = b
\end{math}
where $A_0,\ldots,A_\ell \in \Mat{\F[t]}mn$ and $b \in
\CV{\F[t]}m$. We start this section by identifying systems with good
properties.

\begin{definition}\label{def:regular}
  We say that the system in equation~\eqref{eq:system} is
  \emph{head regular} if $m = n$ (\ie, all the matrices are square)
  and $\det A_\ell \neq 0$.
\end{definition}

\begin{definition}\label{def:suitable}
  We say that the system in equation~\eqref{eq:system} is
  \emph{tail regular} if $m = n$ and $\det A_0 \neq 0$.
\end{definition}

\begin{definition}\label{def:full.reg}
  The system $A(y) = b$ in~equation~\eqref{eq:system} is called
  \emph{fully regular} if it is head regular and there exists a
  unimodular operator matrix $P \in \MatGr{\OLP}n$ such that the
  related system $(P A)(\tilde y) = P(b)$ is tail regular.
\end{definition}

We will show later in \autoref[Section]{sec:regularise} that any head
regular system is actually already fully regular and how the
transformation matrix $P$ from \autoref[Definition]{def:full.reg} can
be computed.

Moreover, in \autoref[Definition]{def:full.reg}, we can always choose
$P$ in such a way that the coefficient matrices $\tilde{A}_0, \ldots,
\tilde{A}_{\tilde\ell}$ and the right hand side of the related system
$(P A)(\tilde y) = P(b)$ are polynomial: simply multiplying by a
common denominator will not change the unimodularity of $P$.

The preceding \autoref[Definition]{def:full.reg} is very similar to
\emph{strongly row-reduced} matrices
\cite[Def.~4]{AbramovBarkatou2014}. The main difference is that we
allow an arbitrary transformation $P$ which translates between a head
and tail regular system while \cite{AbramovBarkatou2014} require their
transformation (also called $P$) to be of the shape
$\diag(\sigma^{m_1},\ldots,\sigma^{m_n})$ for some specific exponents
$m_1,\ldots,m_n \in \Z$. At this time, we do not know which of the two
forms is more advantageous; it would be an interesting topic for
future research to explore whether the added flexibility that our
definition gives can be used to make the algorithm more efficient.

\begin{remark}\label{rem:full.reg}
  In the situation of \autoref[Definition]{def:full.reg}, the
  denominators of the solutions of the original system $A(y) = b$ and
  the related system $\tilde{A}(\tilde{y}) = \tilde{b}$ are the same:
  By \autoref[Remark]{rem:ore}, we know that $y$ solves the original
  system if and only if $\tilde{y}$ solves the related system. The
  matrix $Q$ of \autoref[Remark]{rem:ore} is just the identity in this
  case.
\end{remark}

\smallskip

We are now ready to state the main result of this section. For the rational difference field this result appears in various specialised forms. E.g., the version $m=n=1$ can be also found in~\cite{CPS:08} and gives an alternative description of Abramov's denominator bound for scalar recurrences~\cite{Abramov:89a}. Furthermore, the first order case $l=1$ can be rediscovered also in~\cite{Barkatou1999}.

\begin{theorem}\label{thm:main}
  Let the system in~equation~\eqref{eq:system} be fully regular,
  and let $y = d^{-1} z \in \CV{\F(t)}n$ be a solution in reduced
  form. Let $(P A)(\tilde y) = P(b)$ be a tail regular related system
  with trailing coefficient matrix $\tilde{A}_0 \in
  \Mat{\F[t]}nn$. Let $m$ be the common denominator of $A_\ell^{-1}$
  and let $p$ be the common denominator of $\tilde{A}_0^{-1}$. Then
  \begin{equation}\label{Equ:FormualForD}
    \disp(\aperiodic(d)) 
    \leq \disp(\sigma^{-\ell}(\aperiodic(m)), \aperiodic(p)) = D
  \end{equation}
  and
  \begin{equation}\label{Equ:DenBoundFormula}
    \aperiodic(d) 
    \Bigmid
    \gcd\Bigl(
    \prod_{j=0}^D \sigma^{-\ell-j}(\aperiodic(m)),
    \prod_{j=0}^D \sigma^j(\aperiodic(p))
    \Bigr).
  \end{equation}
\end{theorem}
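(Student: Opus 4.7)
The plan is to bound the $h$-adic multiplicities of $d$ for every aperiodic irreducible $h\in\F[t]$ and then assemble the pieces. Write $e_j=v_{\sigma^j(h)}(d)$, $m_j=v_{\sigma^j(h)}(m)$, $p_j=v_{\sigma^j(h)}(p)$; when $h\mid\aperiodic(d)$, aperiodicity of $h$ makes the shifts $\sigma^j(h)$ pairwise non-associate irreducibles, so $(e_j)_{j\in\Z}$ is supported on a finite interval $[\nu,\mu]$. The first task is to derive a recursive ``head'' inequality. Since $\det A_\ell\ne0$, solve for the leading shift as $\sigma^\ell y=A_\ell^{-1}\bigl(b-\sum_{i<\ell} A_i\,\sigma^i y\bigr)$. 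The crucial observation is that the common denominator of a \emph{sum} of rational vectors is the \emph{lcm}, not the product, of the individual denominators; hence each component of the right hand side has denominator dividing $m\cdot\lcm(\sigma^0 d,\dots,\sigma^{\ell-1}d)$. Comparing componentwise with the left hand side $\sigma^\ell z_k/\sigma^\ell(d)$ and exploiting the reduced-form assumption $\gcd(z_1,\dots,z_n,d)=1$---so that at every prime divisor $\sigma^j(h)$ of $d$ some coordinate $z_k$ is coprime to it---yields the polynomial divisibility $\sigma^\ell(d)\bigm|m\cdot\lcm(\sigma^0 d,\dots,\sigma^{\ell-1}d)$. Taking valuations at $q=\sigma^{\ell+j}(h)$ converts this into
\begin{equation*}
  e_j \;\le\; m_{\ell+j} + \max_{j+1\le k\le j+\ell} e_k .
\end{equation*}

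The same manipulation applied to the tail-regular related system (whose coefficient matrices and right hand side can be taken polynomial, and whose $\tilde A_0^{-1}$ has common denominator $p$) gives the symmetric ``tail'' inequality
\begin{equation*}
  e_j \;\le\; p_j + \max_{j-\tilde\ell\le k\le j-1} e_k .
\end{equation*}
Evaluating the head inequality at $j=\mu$ collapses the max to zero and forces $e_\mu\le m_{\ell+\mu}$; symmetrically $e_\nu\le p_\nu$. Hence $\sigma^{\ell+\mu}(h)\mid m$ and $\sigma^\nu(h)\mid p$, and, $h$ being aperiodic, the same divisibilities persist for $\aperiodic(m)$ and $\aperiodic(p)$. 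So $\sigma^\mu(h)$ is a common factor of $\sigma^{-\ell}(\aperiodic(m))$ and $\sigma^{\mu-\nu}(\aperiodic(p))$, placing $\mu-\nu$ into $\spread(\sigma^{-\ell}(\aperiodic(m)),\aperiodic(p))$ and so $\mu-\nu\le D$. Because $0\in[\nu,\mu]$ whenever $h\mid d$, this gives the subsidiary bounds $\mu\le D$, $-\nu\le D$, and the dispersion estimate $\disp(\aperiodic(d))\le D$ (after taking the max over orbits).

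For the divisibility assertion, a descending induction on $k\ge 0$ using the head inequality yields $e_{\mu-k}\le\sum_{i=0}^k m_{\ell+\mu-i}$. For the step, each $e_{\mu-k+r}$ with $1\le r\le\ell$ is inductively bounded by $\sum_{i=0}^{k-r}m_{\ell+\mu-i}$, and the maximum over $r$ of these inductive bounds is attained at $r=1$ and equals $\sum_{i=0}^{k-1}m_{\ell+\mu-i}$; adding $m_{\ell+\mu-k}$ telescopes to the next full sum. Specialising to $k=\mu$ and using $\mu\le D$ gives $e_0\le\sum_{r=0}^\mu m_{\ell+r}\le v_h\bigl(\prod_{j=0}^D\sigma^{-\ell-j}(\aperiodic(m))\bigr)$; a parallel ascending induction on the tail inequality starting from $j=\nu$ produces $e_0\le v_h\bigl(\prod_{j=0}^D\sigma^j(\aperiodic(p))\bigr)$. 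As $h$ was an arbitrary aperiodic irreducible, $\aperiodic(d)$ divides the asserted gcd.

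The main obstacle, in my judgement, is the lcm step at the very start. Replacing the common denominator by the naive product $m\prod_{i<\ell}\sigma^i(d)$ would produce the recurrence $e_j\le m_{\ell+j}+\sum_{k=j+1}^{j+\ell} e_k$, whose iteration yields upper bounds growing exponentially in $\mu-j$ and is therefore useless for the theorem. Only by keeping the lcm, which turns ``$\sum$'' into ``$\max$'' in the recurrence, does the telescoping into a linear-length sum go through. A secondary technical care point is the vector handling: one must take the minimum over coordinates of the valuation of $\sigma^\ell z$ and invoke the reduced-form condition $\gcd(z_1,\dots,z_n,d)=1$ to ensure that at least one coordinate is coprime to every irreducible factor of $d$.
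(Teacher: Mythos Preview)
Your proof is correct and takes a genuinely different route from the paper. The paper splits the theorem into two lemmata and argues \emph{globally}: for the dispersion bound it assumes $\disp(\overline d)=\lambda>D$, picks irreducible factors $a,g$ of $\overline d$ with $\sigma^\lambda(a)/g\in\F$ and $\lambda$ maximal, and derives a contradiction by multiplying the system through by the adjugate-type matrix $U$ with $UA_\ell=\overline m\,\ID$ (respectively $V\tilde A_0=\overline p\,\ID$) and tracking which side of the cleared-denominator identity $\sigma^\ell(g)$ can divide. For the divisibility bound it performs an \emph{iterative substitution}: solve for $\sigma^\ell y$, apply $\sigma^{-1}$ to get $\sigma^{\ell-1}y$, substitute back, and repeat $D$ times, ending with an explicit identity whose scalar denominator is $\prod_{j=0}^D\sigma^{-j}(\overline m)$; then a single divisibility check on an irreducible factor of $\overline d$ finishes.

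Your argument is instead \emph{local}: you fix an aperiodic irreducible $h$, pass to the integer sequence $e_j=v_{\sigma^j(h)}(d)$, and derive the recursive inequalities $e_j\le m_{\ell+j}+\max_{j+1\le k\le j+\ell}e_k$ and $e_j\le p_j+\max_{j-\tilde\ell\le k\le j-1}e_k$. The rest is elementary induction on finitely supported integer sequences. What this buys you is modularity and transparency: the two halves of the theorem become the boundary evaluation and the telescoping sum of the same recurrence, and the role of the lcm (turning a sum into a max) is made explicit---your closing remark correctly identifies this as the crux. What the paper's approach buys is that it stays entirely within polynomial arithmetic, never introducing valuations, and the substitution trick yields an explicit operator identity (equation~(\ref{eq:final})) that one could in principle reuse; it also avoids having to argue separately over all orbit representatives. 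Both proofs are of comparable length and neither is strictly more general. A minor presentational point: when you write ``supported on a finite interval $[\nu,\mu]$'' you mean that $\nu,\mu$ are the extremal indices of the (possibly non-contiguous) support, which is all you use; saying so explicitly would preempt a misreading.
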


We will show in Section~\ref{sec:regularise} that any coupled system of the form~\eqref{eq:system} can be brought to a system which is fully regular and which contains the same solutions as the original system. Note further that the denominator bound of the aperiodic part given on the right hand side of~\eqref{Equ:DenBoundFormula} can be computed if the dispersion of polynomials in $\F[t]$ (more precisely, if $D$) can be computed.
Summarising, Theorem~\ref{Thm:GlobalMain} is established if Theorem~\ref{thm:main} is proven and if the transformation of system~\eqref{Equ:DenBoundFormula} to a fully regular version is worked out in Section~\ref{sec:regularise}.

\begin{remark}\label{Remark:ImprovedMultivariateDen}
Let $(\F(t_1)\dots(t_e),\sigma)$ be a \pisiSE-extension of $(\F,\sigma)$ with $\sigma(t_i)=\alpha_i\,t_i+\beta_i$ ($\alpha_i\in\F^*$, $\beta_i\in\F$) for $1\leq i\leq e$. In this setting a multivariate aperiodic denominator bound $d\in\F[t_1,\dots,t_e]\setminus\{0\}$ has been provided for a coupled system in Corollary~\ref{Cor:PiSigmaNested}. Namely, within its proof we determine the aperiodic denominator bound $d$ by applying Theorem~\ref{Thm:GlobalMain} (and thus internally Theorem~\ref{thm:main}) for each \pisiSE-monomial $t_i$. Finally, we merge the different denominator bounds $d_i$ to the global aperiodic denominator bound $d=\lcm(d_1,\dots,t_e)$. In other words, the formula~\eqref{Equ:DenBoundFormula} is reused $e$ times (with possibly different $D$s). This observation gives rise to the following improvement: it suffices to compute for $1\leq i\leq e$ the dispersions $D_i$ (using the formula~\eqref{Equ:FormualForD} for the different \pisiSE-monomials $t_i$), to set $D=\max(D_1,\dots,D_e)$ and to apply only once the formula~\eqref{Equ:DenBoundFormula}. We will illustrate this tactic in an example of Section~\ref{sec:examples}.
\end{remark}

For the sake of clarity we split the proof into two lemmata.

\begin{lemma}\label{lem:disp}
  With the notations of \autoref[Theorem]{thm:main}, it is
  \begin{equation*}
    \disp(\aperiodic(d)) 
    \leq \disp(\sigma^{-\ell}(\aperiodic(m)), \aperiodic(p)) = D.
  \end{equation*}
\end{lemma}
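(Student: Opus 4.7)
The plan is to extract two divisibility constraints on $d$ from the fully regular structure of the system, and then to bound each $\sigma$-orbit of an irreducible factor of $\aperiodic(d)$. By \autoref[Remark]{rem:full.reg} the denominator $d$ is common to the original system and to the tail regular related system $\tilde A(\tilde y)=\tilde b$, so I may work with both simultaneously. From the original (head regular) system I would solve for $\sigma^\ell(y)=A_\ell^{-1}\bigl(b-\sum_{i=0}^{\ell-1}A_i\sigma^i(y)\bigr)$, observe that $y=d^{-1}z$ in reduced form forces $\sigma^\ell(y)$ into reduced form with denominator $\sigma^\ell(d)$, and compare denominators to conclude $\sigma^\ell(d)\mid m\cdot d\cdot\sigma(d)\cdots\sigma^{\ell-1}(d)$. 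A symmetric manipulation of the tail regular related system, after clearing denominators so that its coefficient matrices and right hand side are polynomial, yields $d\mid p\cdot\sigma(d)\cdots\sigma^{\tilde\ell}(d)$.

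Next, for each irreducible factor $h$ of $\aperiodic(d)$ I would study its $\sigma$-orbit inside $\aperiodic(d)$. By \autoref[Lemma]{Lemma:PeriodicChar} the factor $h$ is not a scalar multiple of $t$ (when $t$ is a $\Pi$-monomial), so its integer shifts $\sigma^k(h)$ are pairwise non-associate irreducibles and $S_h:=\{k\in\Z : \sigma^k(h)\mid\aperiodic(d)\}$ is finite; let $k_{\max}$ and $k_{\min}$ denote its extrema. Applying the first divisibility to $\sigma^{k_{\max}}(h)\mid d$ shows that $\sigma^{k_{\max}+\ell}(h)$ divides $m\cdot d\cdot\sigma(d)\cdots\sigma^{\ell-1}(d)$; it cannot divide any $\sigma^i(d)$ with $0\le i\le\ell-1$, because then $\sigma^{k_{\max}+\ell-i}(h)$ would divide $d$, and by \autoref[Lemma]{Lemma:PeriodicChar} would actually divide $\aperiodic(d)$ (as no shift of $h$ is a scalar multiple of $t$), violating the maximality of $k_{\max}$. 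Hence $\sigma^{k_{\max}+\ell}(h)\mid m$, and being aperiodic it divides $\aperiodic(m)$. A mirror argument on the tail regular side, using minimality of $k_{\min}$, yields $\sigma^{k_{\min}}(h)\mid\aperiodic(p)$.

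Finally, setting $k:=k_{\max}-k_{\min}\ge 0$, the irreducible $\sigma^{k_{\max}}(h)$ divides both $\sigma^{-\ell}(\aperiodic(m))$ and $\sigma^{k}(\aperiodic(p))$, so $\gcd\bigl(\sigma^{-\ell}(\aperiodic(m)),\sigma^{k}(\aperiodic(p))\bigr)\notin\F$ and therefore $k\le D$. Since $\disp(\aperiodic(d))$ is realised as $k_{\max}-k_{\min}$ for some orbit, the desired bound $\disp(\aperiodic(d))\le D$ follows. The main technical obstacle I expect is the careful bookkeeping of aperiodicity in the $\Pi$-monomial case: one has to ensure that none of the various shifts of $h$ that appear in the divisibility constraints accidentally collapse onto the periodic factor $t^{(\cdot)}$ of some $\sigma^i(d)$ or of $m$; \autoref[Lemma]{Lemma:PeriodicChar} pins down the only dangerous case, namely $h$ being a scalar multiple of $t$, and excludes it automatically for factors of $\aperiodic(d)$.
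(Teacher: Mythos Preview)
Your proposal is correct and rests on the same core idea as the paper: the extreme shifts of an irreducible factor of $\aperiodic(d)$ along its $\sigma$-orbit are forced, via the head regular system, into $\aperiodic(m)$ and, via the tail regular related system, into $\aperiodic(p)$. The organisation differs, however. The paper argues by contradiction: it assumes $\disp(\aperiodic(d))=\lambda>D$, picks irreducible factors $a,g$ of $\aperiodic(d)$ with $\sigma^\lambda(a)\sim g$ and $\lambda$ maximal, and then performs a case split on whether $a\mid\aperiodic(p)$; in each case it multiplies the relevant system by the adjugate-type matrix ($U$ with $UA_\ell=\overline m\,\ID$, respectively $V$ with $V\tilde A_0=\overline p\,\ID$), clears denominators, and exhibits an irreducible that divides one side of the resulting polynomial identity but not the other. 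You instead first distil the two divisibility facts $\sigma^\ell(d)\mid m\cdot\prod_{i=0}^{\ell-1}\sigma^i(d)$ and $d\mid p\cdot\prod_{i=1}^{\tilde\ell}\sigma^i(d)$ once and for all, and then run a direct orbit argument on $k_{\max}$ and $k_{\min}$ without any case distinction. Your route is a bit more streamlined and makes the symmetry between the head and tail regular systems explicit; the paper's route keeps the system equations visible throughout, which has the minor advantage that the reduced-representation hypothesis $\gcd(d,z)=1$ is invoked at the exact line where it is needed. Mathematically the two arguments are equivalent.
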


\begin{proof}
  For the ease of notation, we will simply write $\overline{p}$
  instead of $\aperiodic(p)$ and we will do the same with
  $\overline{m} = \aperiodic(m)$ and $\overline{d} =
  \aperiodic(d)$.

  Assume that $\disp(\overline{d}) = \lambda > D$ for some $\lambda
  \in \N$. Then we can find two irreducible aperiodic factors $a,g \in
  \F[t]$ of $\overline{d}$ such that $\sigma^\lambda(a)/g \in \F$. In particular, due to Lemma~\ref{Lemma:PeriodicChar} 
  we can choose $a,g$ with this property such that $\lambda$ is maximal.

  We distinguish two cases. First, assume that $a \mid
  \overline{p}$. We claim that in this case we have $\sigma^\ell(g)
  \nmid \overline{m}$. Otherwise, it was $g \mid
  \sigma^{-\ell}(\overline{m})$ which together with $g \mid
  \sigma^\lambda (a) \mid \sigma^\lambda (\overline{p})$ implied
  $\lambda \in \spread(\sigma^{-\ell} (\overline{m}), \overline{p})$
  which contradicts $D < \lambda$. Moreover, $\sigma^\ell(g)$ cannot
  occur in $\sigma^i(\overline{d})$ for $0 \leq i < \ell$ because else
  $\sigma^\ell(g) \mid \sigma^i (\overline{d})$ and thus $\tilde b =
  \sigma^{\ell - i} (g) \mid \overline{d}$ implied that $a$ and
  $\tilde g$ are factors of $\overline{d}$. Now, since
  $\sigma^{\lambda+\ell-i}(a)/\tilde g =
  \sigma^{\ell-i}(\sigma^\lambda (a)/g) \in \F$, this contradicts the
  maximality of $\lambda$. Thus, $\sigma^\ell (g)$ must occur in the
  denominator of
  \begin{equation}
    \label{eq:proof}
    A_\ell \sigma^\ell (y) 
    + A_{\ell-1} \sigma^{\ell-1} (y)
    \ldots
    + A_1 \sigma (y)
    + A_0 y
    = b
    \in \CV{\F[t]}n
  \end{equation}
  for at least one component: Let $A_\ell^{-1} = \overline{m} U$ for some $U \in
  \Mat{\F[t]}nn$. Then $U A_\ell = \overline{m} \ID[n]$ and
  \begin{multline*}
    \underbrace{U A_\ell}_{= \overline{m}\ID} \sigma^\ell (y) 
    + U A_{\ell-1} \sigma^{\ell-1} (y)
    \ldots
    + U A_1 \sigma (y)
    + U A_0 y
    \\
    = \frac{\overline{m} \sigma^\ell(z)}{\alpha \sigma^\ell(g)} 
    + \frac{
      \sum_{j\neq\ell}  
      \Bigl( \prod_{k \neq j,\ell} \sigma^j(\overline{d}) \Bigr) U A_j \sigma^j(z)
    }{\prod_{j \neq \ell} \sigma^j(\overline{d})}
    = U b
    \in \CV{\F[t]}n
  \end{multline*}
  for some $\alpha \in \CV{\F[t]}n$ such that
  $\sigma^\ell(\overline{d}) = \alpha \sigma^\ell(g)$. The equation is
  equivalent to
  \begin{equation*}
    \Bigl(\prod_{j\neq \ell} \sigma^j(\overline{d}) \Bigr) \overline{m} \sigma^\ell(z) 
    =
    \Bigl(\Bigl(\prod_{j\neq \ell} \sigma^j(\overline{d}) \Bigr) U b 
    - 
    \sum_{j\neq\ell} 
    \Bigl( \prod_{k \neq j,\ell} \sigma^j(\overline{d}) \Bigr) U A_j \sigma^j(z)
    \Bigr)
    \alpha \sigma^\ell(g).
  \end{equation*}
  Note that (every component of the vector on) the \RHS\ is divisible
  by $\sigma^\ell(g)$. For the \LHS, we have
  \begin{equation*}
    \sigma^\ell(g) 
    \Bignmid
    \overline{m} \prod_{j \neq \ell} \sigma^j(\overline{d}).
  \end{equation*}
  Also, we know that $g \nmid z_j$ for at least one $j$. Thus,
  $\sigma^\ell(g)$ does not divide (at least one component of) the
  \LHS. This is a contradiction.

  We now turn our attention to the second case $a \nmid
  \overline{p}$. Here, we consider the related tail regular system
  \begin{math}
    \tilde{A}_{\tilde\ell} \sigma^{\tilde\ell}(y)
    + \ldots + 
    \tilde{A}_0 y = \tilde{b}
  \end{math}
  instead of the original system. Recall that $y$ remains unchanged
  due to \autoref[Remark]{rem:full.reg}. Similar to the first case,
  let $\tilde{A}_0^{-1} = \overline{p} V$, \ie, $V \tilde{A}_0 =
  \overline{p} \ID[n]$ for some $V \in \Mat{\F[t]}nn$. Note that $a
  \nmid \sigma^i(\overline{d})$ for all $i \geq 1$; otherwise,
  $\sigma^{-i}(a)$ was a factor of $\overline{d}$ with
  $\sigma^{\lambda+i}(\sigma^{-i}(a))/b \in \F$ contradicting the
  maximality of $\lambda$. Let now
  \begin{equation*}
    V \tilde A_{\tilde\ell} \sigma^{\tilde\ell}(y)
    + \ldots
    + V \tilde A_1 \sigma(y)
    + \overline{p} \ID[n] y
    = \tilde \xi \in \CV{\F[t]}n.
  \end{equation*}
  We write again $y = \overline{d}^{-1} z$. Then, after multiplying
  with the common denominator $\overline{d} \sigma(\overline{d})
  \cdots \sigma^\ell(\overline{d})$ and rearranging the terms we
  obtain
  \begin{equation*}
    \overline{p} \left( \prod_{k\neq 0} \sigma^k(\overline{d}) \right) z
    = \Bigl(\prod_{j=0}^\ell \sigma^j(\overline{d})\Bigr) \tilde\xi 
    - \sum_{j=1}^\ell 
    \Bigl( \prod_{k\neq j} \sigma^k(\overline{d}) \Bigr) 
    V \tilde A_j \sigma^j(z)
  \end{equation*}
  where every term on the \RHS\ is divisible $a$. However, on the
  \LHS\ $a$ does not divide the scalar factor $\overline{p}
  \prod_{k\neq0} \sigma^k(\overline{d})$ and because of $\gcd(z,
  \overline{d}) = 1$ there is at least one component of $z$ which is
  not divisible by $a$. Thus, $a$ does not divide the left hand side
  which is a contradiction.\qed
\end{proof}

\begin{lemma}\label{lem:bound}
  With the notations of \autoref[Theorem]{thm:main}, we have
  \begin{equation*}
    \aperiodic(d) 
    \Bigmid
    \gcd\Bigl(
    \prod_{j=0}^D \sigma^{-\ell-j}(m),
    \prod_{j=0}^D \sigma^j(p)
    \Bigr).
  \end{equation*}
\end{lemma}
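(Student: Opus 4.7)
The plan is to strengthen the dispersion bound of Lemma~\ref{lem:disp} into a multiplicity-preserving divisibility. Fix any irreducible aperiodic factor $a$ of $\bar d := \aperiodic(d)$ and record its \emph{profile} in $\bar d$ by $e_k := \nu_{\sigma^k(a)}(\bar d)$ for $k \in \Z$, where $\nu_\pi$ denotes the $\pi$-adic valuation. Lemma~\ref{lem:disp} ensures that the support $T_a = \{k : e_k > 0\}$ lies in an interval $[k_{\min}, k_{\max}]$ containing $0$ of length at most $D$; in particular $0 \leq k_{\max} \leq D$ and $0 \leq -k_{\min} \leq D$.

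Next, I would rerun the valuation bookkeeping from the first case of the proof of Lemma~\ref{lem:disp} at every $k \in \Z$ (not merely at $k_{\max}$). Multiplying the head-regular system by $U := m A_\ell^{-1} \in \Mat{\F[t]}nn$ (polynomial by the definition of $m$), picking---when $e_k > 0$---a component $i$ with $\sigma^k(a) \nmid z_i$ (available by the reduced form hypothesis $\gcd(z,d) = 1$ together with $\sigma^k(a) \mid d$), and matching $\sigma^{k+\ell}(a)$-adic valuations on both sides of the resulting identity yields
\begin{equation*}
  \nu_{\sigma^{k+\ell}(a)}(m) \;\geq\; e_k - \max_{1 \leq s \leq \ell} e_{k+s}
\end{equation*}
(the case $e_k = 0$ is trivially satisfied since $m$ is polynomial). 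The dual argument, applied to the tail-regular related system $(PA)(\tilde y) = P(b)$---whose solution is the same $y$ by Remark~\ref{rem:full.reg}---with multiplier $V := p\, \tilde A_0^{-1}$, gives
\begin{equation*}
  \nu_{\sigma^k(a)}(p) \;\geq\; e_k - \max_{1 \leq s \leq \tilde\ell} e_{k-s}.
\end{equation*}

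Summing the first inequality over $k \in [0, D]$ produces
\begin{math}
  \nu_a\bigl(\prod_{j=0}^D \sigma^{-\ell-j}(m)\bigr)
  = \sum_{k=0}^D \nu_{\sigma^{k+\ell}(a)}(m)
  \geq \sum_{k=0}^D \max\bigl(0, e_k - \max_{1 \leq s \leq \ell} e_{k+s}\bigr),
\end{math}
and an analogous lower bound holds on the tail side (with the index $k$ running over $[-D, 0]$). A short combinatorial lemma---proved by induction on the profile: either $e_0$ strictly exceeds its forward window, so one unit peels off at position $0$, or some $e_{s^*} \geq e_0$ with $1 \leq s^* \leq \ell$ and one shifts to the subprofile beginning at $s^*$---shows that both of these sums are at least $e_0 = \nu_a(\bar d)$. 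Taking the minimum across the two products therefore gives $\nu_a(\bar d) \leq \nu_a\bigl(\gcd(\ldots)\bigr)$, and ranging over all irreducible aperiodic $a$ delivers the claimed divisibility. The main obstacle is precisely this last combinatorial step: the dispersion bound together with the two valuation inequalities easily produces one shift at which $a$ reappears, but matching the \emph{full multiplicity} of $a$ in $\bar d$ against the product requires the telescoping estimate.
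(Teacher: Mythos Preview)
Your argument is correct and takes a genuinely different route from the paper's. The paper proceeds by \emph{iterated substitution}: starting from the relation $\sigma^\ell(y) = m^{-1}\,U\bigl(b - \sum_{j<\ell} A_j \sigma^j(y)\bigr)$, it repeatedly shifts and resubstitutes so as to express $\sigma^\ell(y)$ purely in terms of $\sigma^{-D}(y),\ldots,\sigma^{\ell-D-1}(y)$, with the accumulated scalar denominator $\prod_{j=0}^D \sigma^{-j}(m)$. After one clearing of denominators the factor $\sigma^\ell(\bar d)$ sits visibly on the right-hand side, while the dispersion bound from Lemma~\ref{lem:disp} kills any contribution of $\sigma^\ell(q)$ from the middle product $\prod_{j=-D}^{\ell-D-1}\sigma^j(\bar d)$; this forces $q^e \mid \prod_{j=0}^D \sigma^{-\ell-j}(m)$ for every irreducible $q^e \mid \bar d$ in a single stroke, multiplicity included. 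By contrast, you stay with the \emph{uniterated} system, extract one local valuation inequality per shift, and then recover the full multiplicity $e_0$ by summing and invoking your telescoping lemma. The paper's method is shorter and needs no auxiliary combinatorics; your decomposition makes the bookkeeping at each shift explicit, and the local estimate $\nu_{\sigma^{k+\ell}(a)}(m) \geq e_k - \max_{1\le s\le\ell} e_{k+s}$ is a reusable statement that could feed into sharper, prime-by-prime denominator bounds.
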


\begin{proof}
  Again, we will simply write $\overline{p}$, $\overline{m}$ and
  $\overline{d}$ instead of $\aperiodic(p)$, $\aperiodic(m)$ and
  $\aperiodic(d)$, respectively.  As in the proof of
  \autoref[Lemma]{lem:disp}, we let $U \in \Mat{\F[t]}nn$ be such that
  $U A_\ell = \overline{m} \ID$. Multiplication by $U$ from the left
  and isolating the highest order term transforms the
  system~\eqref{eq:system} into
  \begin{equation}
    \label{eq:sigma^ell.y}
    \sigma^\ell(y) 
    = \frac1{\overline{m}} U \Bigl(b - \sum_{j=0}^{\ell-1} A_j \sigma^j(y) \Bigr).
  \end{equation}
  Now, we apply $\sigma^{-1}$ to both sides of the equation in order
  to obtain an identity for $\sigma^{\ell-1}(y)$
  \begin{equation*}
    \sigma^{\ell-1}(y) 
    = \frac1{\sigma^{-1} (\overline{m})} 
    \sigma(U) \Bigl(\sigma(b) 
    - \sum_{j=0}^{\ell-1} \sigma(A_j) \sigma^{j-1}(y) \Bigr).
  \end{equation*}
  We can substitute this into~\eqref{eq:sigma^ell.y} in order to
  obtain a representation
  \begin{multline*}  
    \sigma^\ell(y)
    = \frac1{\overline{m}} U \Bigl(b 
    - A_{\ell-1} \frac1{\sigma^{-1}(\overline{m})} \sigma(U) \Bigl(\sigma (b) 
    - \sum_{j=0}^{\ell-1} \sigma (A_j) \sigma^{j-1} (y) \Bigr)
    - \sum_{j=0}^{\ell-2} A_j \sigma^j (y) \Bigr)
    \\
    = \frac1{\overline{m} \, \sigma^{-1}( \overline{m})} \tilde U \Bigl(\tilde b 
    - \sum_{j=-1}^{\ell-2} \tilde A_j \sigma^j (y) \Bigr)
  \end{multline*}
  for $\sigma^\ell (y)$ in terms of $\sigma^{\ell-2} (y), \ldots,
  \sigma^{-1} (y)$ where $\tilde b \in \CV{\F[t]}n$ and $\tilde
  A_{\ell-2},\ldots,\tilde A_{-1}, \tilde U \in \Mat{\F[t]}nn$.

  We can continue this process shifting the terms on the right side
  further with each step. Eventually, after $D$ steps, we will arrive
  at a system of the form
  \begin{equation}
    \label{eq:final}
    \sigma^\ell (y)
    =
    \frac1{\overline{m}\;\sigma^{-1}(\overline{m})\cdots\sigma^{-D} (\overline{m})}
    U' \Bigl(
    b' - \sum_{j=-D}^{\ell-D-1} A'_j \sigma^j (y)
    \Bigr)
  \end{equation}
  where $b' \in \CV{\F[t]}n$ and $A'_{-D},\ldots,A'_{\ell-D-1}, U' \in
  \Mat{\F[t]}nn$.

  Assume now that $y = \overline{d}^{-1} z$ is a solution of~\eqref{eq:system} or thus of~\eqref{eq:final} which is  
  in reduced representation for some
  $\overline{d} \in \F[t]$ and a vector $z \in \CV{\F[t]}n$. Substituting this in
  equation~\eqref{eq:final} yields
  \begin{multline*}
    \frac1{\sigma^\ell(\overline{d})} \sigma^\ell (z)
    = \frac1{\overline{m} \sigma^{-1}(\overline{m}) \ldots \sigma^{-D}(\overline{m})}
    U' \Bigl(
    b' - \sum_{j=-D}^{\ell-D-1} A_j' \frac1{\sigma^j (\overline{d})} \sigma^j (z)
    \Bigr)  
    \\
    = \frac1{\prod_{j=0}^D \sigma^{-j} (\overline{m}) \cdot 
      \prod_{j=-D}^{\ell-D-1} \sigma^j (\overline{d})}
    U' \Bigl(
    \prod_{j=-D}^{\ell-D-1} \sigma^j (\overline{d}) \cdot b' - 
    \sum_{j=-D}^{\ell-D-1} \prod_{k\neq j} \sigma^k (\overline{d}) \cdot A_j' \sigma^j (z)
    \Bigr)  
  \end{multline*}
  or, equivalently after clearing denominators,
  \begin{multline}
    \label{eq:cleared}
    \prod_{j=0}^D \sigma^{-j} (\overline{m}) 
    \cdot \prod_{j=-D}^{\ell-D-1} \sigma^j (\overline{d}) 
    \cdot \sigma^\ell (z)
    \\
    =
    \sigma^\ell (\overline{d}) \cdot
    U' \Bigl(
    \prod_{j=0}^{\ell-1} \sigma^{j-D} (\overline{d}) \cdot b' - 
    \sum_{j=-D}^{\ell-D-1} \prod_{k\neq j} \sigma^k (\overline{d})  \cdot A_j' \sigma^j (z)
    \Bigr).
  \end{multline}
  Let further $q \in \F[t]$ be an irreducible factor of the aperiodic
  part of $\overline{d}$. Then $\sigma^\ell (q)$ divides the right hand side of
  equation~\eqref{eq:cleared}. Looking at the left hand side, we
  see that $\sigma^\ell (q)$ cannot divide $\prod_{j=0}^{\ell-1}
  \sigma^{j-D} (\overline{d})$ since $D = \disp(\overline{d})$ and there is at least one
  entry $z_k$ of $z$ with $1 \leq k \leq n$ such that $q \nmid z_k$
  because $\overline{d}^{-1} z$ is in reduced representation. It follows that
  $\sigma^\ell (q) \mid \prod_{j=0}^D \sigma^{-j} (\overline{m})$, or,
  equivalently,
  \begin{math}
    q \mid \prod_{j=0}^D \sigma^{-\ell-j}(\overline{m}).
  \end{math}
  We can thus cancel $q$ from the equation. Reasoning similarly for
  the other irreducible factors of the aperiodic part of $\overline{d}$ we obtain
  $\overline{d} \mid \prod_{j=0}^D \sigma^{-\ell-j}(\overline{m})$.

  \medskip

  In order to prove $\overline{d} \mid \prod_{j=0}^D \sigma^j(\overline{p})$, we consider
  once more the related tail regular system
  \begin{math}
    \tilde{A}_{\tilde\ell} \sigma^{\tilde\ell}(y)
    + \ldots + 
    \tilde{A}_0 y = \tilde{b}.
  \end{math}
  Recall that by \autoref[Remark]{rem:full.reg} $y$ is both a solution
  of the original and the related. Let $V \tilde{A}_0 = \overline{p} \ID$ for
  some $V \in \Mat{\F[t]}nn$. Multiplying the related system by $V$
  and isolating $y$ yields
  \begin{equation*}
    y = \frac1p V \Bigl( \tilde{b}
    - \sum_{j=1}^{\tilde\ell} \tilde{A}_j \sigma^j(\tilde{y}) \Bigr).
  \end{equation*}
  Now, an analogous computation allows us to shift the orders of the
  terms on the \RHS\ upwards. We obtain an equation
  \begin{equation*}
    y = \frac1{\overline{p} \sigma(\overline{p}) \cdots \sigma^D(\overline{p})} V'
    \Bigl(
    \tilde{b}' - 
    \sum_{j=1}^{\tilde\ell} \tilde{A}_j' \sigma^{D+j}(y)
    \Bigr)
  \end{equation*}
  for suitable matrices $V', \tilde{A}_1', \ldots,
  \tilde{A}_{\tilde{\ell}}' \in \Mat{\F[t]}nn$ and $\tilde{b}' \in
  \CV{\F[t]}n$. Substituting again $y = \overline{d}^{-1} z$ and clearing
  denominators we arrive at an equation similar to~\eqref{eq:cleared}
  and using the same reasoning we can show that $\overline{d} \mid \prod_{j=0}^D
  \sigma^j(\overline{p})$.\qed
\end{proof}

\section{Row and Column Reduction}\label{sec:row-red}

We will show in \autoref[Section]{sec:regularise} below that it is
actually possible to make any system of the form~\eqref{eq:system}
fully regular. One of the key ingredients for this will be row (and
column) reduction which we are going to introduce in this section. The
whole exposition closely follows the one
in~\cite{BeckermannChengLabahn2006}. We will concentrate on row
reduction since column reduction works \latin{mutatis mutandis}
basically the same.

Consider an arbitrary operator matrix $A \in \Mat{\Ore}mn$. When we
speak about the \emph{degree} of $A$, we mean the maximum of the
degrees (in $\sigma$) of all the entries of $A$. Similarly, the degree
of a row of $A$ will be the maximum of the degrees in that row.

Let $\nu$ be the degree of $A$ and let $\nu_1,\ldots,\nu_m$ be the
degrees of the rows of $A$. For simplicity, we first assume that none
of the rows of $A$ is zero. When we multiply $A$ by the matrix $\Delta
= \diag(\sigma^{\nu-\nu_1}, \ldots, \sigma^{\nu-\nu_m})$ from the
left, then for each $i=1,\ldots,m$ we multiply the $i$\th\ row by
$\sigma^{\nu-\nu_i}$. The resulting row will have degree $\nu$. That
is, multiplication by $\Delta$ brings all rows to the same degree. We
will write the product as
\begin{equation*}
  \Delta A = A_\nu \sigma^\nu + \ldots + A_1 \sigma + A_0
\end{equation*}
where $A_0,\ldots,A_\nu \in \Mat{\F(t)}mn$ are rational
matrices. Since none of the rows of $A$ is zero, also none of the rows
of $A_\nu$ is zero. We call $A_\nu$ the \emph{leading row coefficient
  matrix} of $A$ and denote it by $A_\nu = \LRCM(A)$. In general, if
some rows of $A$ are zero, then we simply define the corresponding
rows in $\LRCM(A)$ to be zero, too.

\begin{definition}\label{def:row-red}
  The matrix $A \in \Mat{\Ore}mn$ is \emph{row reduced} (\wrt\
  $\sigma$) if $\LRCM(A)$ has full row rank.
\end{definition}

\begin{remark}\label{ref:row-red->head-reg}
  If $A(y) = b$ is a head reduced system where
  \begin{math}
    A = A_\ell \sigma^\ell + \ldots + A_1 \sigma + A_0
  \end{math}
  for $A_0,\ldots,A_\ell \in \Mat{R}nn$, then $A$ is row-reduced. This
  is obvious since in this case $\LRCM(A) = A_\ell$ and $\det A_\ell
  \neq 0$. Conversely, if $A$ is row-reduced, then $\Delta A$ (with
  $\Delta$ as above) is head regular.
\end{remark}

It can be shown that for any matrix $A \in \Mat{\Ore}mn$ there exists
a unimodular operator matrix $P \in \MatGr{\Ore}m$ such that
\begin{equation*}
  P A =
  \begin{pmatrix}
    \tilde{A} \\ \ZERO
  \end{pmatrix}
\end{equation*}
for some row reduced $\tilde{A} \in \Mat{\Ore}rn$ where $r$ is the
(right row) rank of $A$ over $\Ore$. (For more details, see
\cite[Thm.~2.2]{BeckermannChengLabahn2006} and
\cite[Thm.~A.2]{BeckermannChengLabahn2006}.)

It is a simple exercise to derive an analogous column reduction of
$A$. Moreover, it can easily be shown that it has similar properties.
In particular, we can always compute $Q \in \MatGr{\Ore}n$ such that
the product will be
\begin{equation*}
  A Q =
  \begin{pmatrix}
    \dtilde{A} & \ZERO 
  \end{pmatrix}
\end{equation*}
for some column reduced $\dtilde{A} \in \Mat{\Ore}mr$ where $r$ is the
(left column) rank of $A$.

\medskip

We remark that in fact $r$ in both cases will be the same number since
the left column rank of $A$ equals the right row rank by, \eg,
\cite[Thm.~8.1.1]{cohn}. Therefore, in the following discussion we
will simply refer to it as the \emph{rank} of $A$.

\section{Regularisation}\label{sec:regularise}

In \autoref[Theorem]{thm:main}, we had assumed that we were dealing
with a fully regular system. This section will explain how every
arbitrary system can be transformed into a fully regular one with the
same set of solutions.

\medskip

Represent the system~\eqref{eq:system} by an operator matrix $A \in
\Mat{\Ore}mn$. We first apply column reduction to $A$ which gives a
unimodular operator matrix $Q \in \MatGr{\Ore}n$ such that the
non-zero columns of $A Q$ are column reduced. Next, we apply row
reduction to $A Q$ obtaining $P \in \MatGr{\Ore}m$ such that in total
\begin{equation*}
  P A Q =
  \begin{pmatrix}
    \tilde{A} & \ZERO \\
    \ZERO & \ZERO
  \end{pmatrix}
\end{equation*}
where $\tilde{A} \in \Mat{\Ore}rr$ will now be a row reduced square
matrix and $r$ is the rank of $A$. 

If we define the matrix $\Delta$ as in the previous
\autoref[Section]{sec:row-red}, then the leading coefficient matrix of
$\Delta P A Q$ and that of $P A Q$ will be the same. Moreover, since
$\Delta$ is unimodular over $\OLP$, also $\Delta P$ is unimodular over
$\OLP$. Thus, if we define $\dtilde{A} \in \Mat{\Ore}rr$ by
\begin{equation*}
  \Delta P A Q = 
  \begin{pmatrix}
    \dtilde{A} & \ZERO \\
    \ZERO & \ZERO
  \end{pmatrix},
\end{equation*}
then we can write
\begin{equation*}
  \dtilde{A} 
  = \dtilde{A}_\nu \sigma^\nu + \ldots + \dtilde{A}_1 \sigma + \dtilde{A}_0
\end{equation*}
where $\nu$ is the degree of $\dtilde{A}$ and where
$\dtilde{A}_0,\ldots,\dtilde{A}_\nu \in \Mat{\F(t)}rr$ are rational
matrices. Since $\dtilde{A}$ is still row reduced, we obtain that its
leading row coefficient matrix $\dtilde{A}_\nu$ has full row rank.

Assume now that we started with the system $A(y) = b$. Then $(\Delta P
A Q)(y) = (\Delta P)(b)$ is a related system with the same solutions
as per \autoref[Remark]{rem:ore}. More concretely, let us write
\begin{equation*}
  y =
  \begin{pmatrix}
    y_1 \\ y_2
  \end{pmatrix}
  \qqtext{and}
  (\Delta P)(b) =
  \begin{pmatrix}
    \tilde{b}_1 \\ \tilde{b}_2
  \end{pmatrix}
\end{equation*}
where $y_1$ and $\tilde{b}_1 \in \CV{\F(t)}r$ are vectors of length
$r$, $y_2 \in \CV{\F(t)}{m-r}$ has length $m-r$, and $\tilde{b}_2 \in
\CV{\F(t)}{n-r}$ has length $n-r$. Then $(\Delta P A Q)(y) = (\Delta
P)(b)$ can expressed as
\begin{equation*}
  \dtilde{A}(y_1) = \tilde{b}_1
  \qqtext{and}
  0 = \tilde{b}_2.
\end{equation*}
The requirement that $\tilde{b}_2 = 0$ is a necessary condition for
the system to be solvable. We usually refer to it as a
\emph{compatibility condition}. Moreover, we see that the variables in
$y_2$ can be chosen freely.

If the compatibility condition does not hold, then the system does not
have any solutions and we may abort the computation right
here. Otherwise, $A(y) = b$ is equivalent to a system $\dtilde{A}(y_1)
= \tilde{b}_1$ of (potentially) smaller size. Clearing denominators in
the last system does not change its solvability nor the fact that
$\dtilde{A}$ is row reduced. Thus, we have arrived at an equivalent
head regular system.

\medskip

It remains to explain how we can turn a head regular system into a
fully regular one. Thus, as above we assume now that the first
regularisation step is already done and that the operator matrix $A
\in \Mat{\Ore}nn$ is such that $A(y) = b$ is head regular. That does
in particular imply that $A$ is row reduced and hence that $n$ equals
the rank of $A$ over $\Ore$.

We claim that $n$ is also the rank of $A$ over $\OLP$, \ie, that the
rows of $A$ are linearly independent over $\OLP$. Assume that $v A =
0$ for some $v \in \CV{\OLP}n$. There is a power $\sigma^k$ of
$\sigma$ such that $\sigma^k v \in \CV{\Ore}n$. Since then $(\sigma^k
v) A = 0$, we obtain that $A$ did not have full rank over $\Ore$. The
claim follows by contraposition. Note that also the other direction
obviously holds.

Let $\ell$ be the degree of $A$ and write $A$ as
\begin{equation*}
  A = A_\ell \sigma^\ell + \ldots + A_1 \sigma + A_0
\end{equation*}
where $A_0, \ldots, A_\ell \in \Mat{\F(t)}nn$. We multiply $A$ over
$\OLP$ by $\sigma^{-\ell}$ from the left. This does not change the
rank. The resulting matrix $\sigma^{-\ell} A$ will be in
$\Mat{\OreInv}nn$. Using a similar argument as above, we see that the
rank of $\sigma^{-\ell} A$ over $\OreInv$ is still $n$. We have
\begin{equation*}
  \sigma^{-\ell} A_\ell
  = 
  \sigma^{-\ell}(A_0) \sigma^{-\ell}
  + \ldots
  + \sigma^{-\ell}(A_{\ell-1}) \sigma^{-1}
  + \sigma^{-\ell}(A_\ell),
\end{equation*}
\ie, $\sigma^{-\ell} A$ is similar to $A$ with the coefficients in
reverse order. 

We can now apply row reduction to $\sigma^{-\ell} A$ \wrt\ the Ore
variable\footnote{Note that the commutation rule
  $\sigma^{-1} a = \sigma^{-1}(a)\sigma^{-1}$ follows immediately from
  the rule $\sigma a = \sigma(a) \sigma$.} $\sigma^{-1}$. Just as before we may also
shift all the rows afterwards to bring them to the same degree. Let
the result be
\begin{equation*}
  W \sigma^{-\ell} A 
  = \tilde{A}_0 \sigma^{-\tilde\ell}
  + \ldots
  + \tilde{A}_{\tilde\ell-1} \sigma^{-1}
  + \tilde{A}_{\tilde\ell}
\end{equation*}
where $\tilde\ell$ is the new degree, $W \in \MatGr{\OLP}n$ is a
unimodular operator matrix, the matrices
$\tilde{A}_0,\ldots,\tilde{A}_{\tilde\ell} \in \Mat{\F(t)}nn$ are
rational, and where the non-zero rows of $\tilde{A}_0$ are
independent. However, since the rank of $\sigma^{-\ell} A$ is $n$
(over $\OreInv$), we obtain that $\tilde{A}_0$ does in fact not
possess any zero-rows at all. Thus, $\tilde{A}_0$ has full rank. 

Multiplication by $\sigma^{\tilde\ell}$ from the left, brings
everything back into $\Mat{\Ore}nn$; \ie, we have
\begin{equation*}
  \sigma^{\tilde\ell} W \sigma^{-\ell} A 
  = \sigma^{\tilde\ell}(\tilde{A}_{\tilde\ell}) \sigma^{\tilde\ell}
  + \ldots 
  + \sigma^{\tilde\ell}(\tilde{A}_1) \sigma
  + \sigma^{\tilde\ell}(\tilde{A}_0)
\end{equation*}
where $\sigma^{\tilde\ell}(\tilde{A}_0)$ still has full rank and where
the transformation matrix $\sigma^{\tilde\ell} W \sigma^{-\ell}$ is
unimodular over $\OLP$. In other words, we have found a related tail
regular system. Since we started with a head regular system, we even
found that it is fully reduced.

We can summarise the results of this section in the following way. An
overview of the process is shown in \autoref[Figure]{fig:regularise}.

\begin{theorem}\label{thm:regularise}
  Any system of the form~\eqref{eq:system} can be transformed into an
  related fully regular system. Along the way we acquire some
  compatibility conditions indicating where the system may be
  solvable.
\end{theorem}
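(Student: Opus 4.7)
The plan is to carry out the two-stage procedure outlined in the narrative preceding the theorem and verify that it actually achieves full regularity in the sense of \autoref[Definition]{def:full.reg}. Represent the system by an operator matrix $A \in \Mat{\Ore}mn$.

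In the first stage I would reduce to a head regular system. Apply column reduction to $A$ to obtain $Q \in \MatGr{\Ore}n$ such that the nonzero columns of $AQ$ are column reduced, then apply row reduction to $AQ$ to obtain $P \in \MatGr{\Ore}m$ with
\begin{equation*}
  P A Q = \begin{pmatrix} \tilde A & \ZERO \\ \ZERO & \ZERO \end{pmatrix},
\end{equation*}
where $\tilde A \in \Mat{\Ore}rr$ is square and row reduced and $r$ is the rank of $A$. Multiplication from the left by the diagonal shift $\Delta$ that equalises the row degrees does not destroy row reducedness (as observed in Section~\ref{sec:regularise}), and it turns the top $r$ equations into a head regular system $\dtilde A(y_1) = \tilde b_1$ by \autoref[Remark]{ref:row-red->head-reg}. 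The remaining rows read $0 = \tilde b_2$ and are recorded as compatibility conditions; the components $y_2$ are free. Clearing denominators in the coefficients preserves head regularity, so after this stage the system is head regular in the sense of \autoref[Definition]{def:regular}.

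In the second stage I would convert a head regular system $A \in \Mat{\Ore}nn$ of degree $\ell$ into a fully regular one. Multiply on the left by $\sigma^{-\ell}$ over $\OLP$ and apply the commutation rule to get
\begin{equation*}
  \sigma^{-\ell} A = \sigma^{-\ell}(A_0)\sigma^{-\ell} + \ldots + \sigma^{-\ell}(A_{\ell-1})\sigma^{-1} + \sigma^{-\ell}(A_\ell) \in \Mat{\OreInv}nn,
\end{equation*}
so that the original leading coefficient becomes the new trailing coefficient. Run row reduction with respect to the Ore variable $\sigma^{-1}$ to obtain a unimodular $W \in \MatGr{\OLP}n$ such that the trailing coefficient $\tilde A_0$ of $W\sigma^{-\ell}A$ has full row rank. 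Multiplying back by $\sigma^{\tilde\ell}$ lands the system inside $\Mat{\Ore}nn$ with a trailing coefficient $\sigma^{\tilde\ell}(\tilde A_0)$ that is still full rank, while the overall transformation $\sigma^{\tilde\ell} W \sigma^{-\ell}$ remains unimodular over $\OLP$. Thus the related system is tail regular, and combined with its inherited head regularity it is fully regular.

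The delicate point — which I view as the main obstacle — is that after the row reduction over $\OreInv$ the new trailing matrix $\tilde A_0$ has no zero rows, so that it is actually a full-rank square matrix and not merely a row reduced matrix padded with zeros. The key observation is that the rank of $A$ over $\Ore$ equals its rank over $\OLP$: any syzygy $v A = 0$ with $v \in \CV{\OLP}n$ can be multiplied by a sufficiently high $\sigma^k$ to yield an element of $\CV{\Ore}n$ satisfying the same relation, contradicting head regularity unless $v = 0$. Consequently $\sigma^{-\ell} A$ still has rank $n$ over $\OreInv$, which forces $\tilde A_0$ to have $n$ independent rows and therefore full row rank, completing the argument.
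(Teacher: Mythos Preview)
Your argument follows the paper's proof essentially step for step: column and row reduction to reach a head regular square block $\dtilde A$, compatibility conditions from the zero rows, then reversal via $\sigma^{-\ell}$, row reduction over $\OreInv$, and the rank-preservation argument across $\Ore$, $\OLP$, and $\OreInv$ to ensure no zero rows appear. All of this is correct and matches the paper.

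One point of phrasing deserves care. At the end of stage two you write that ``the related system is tail regular, and combined with its inherited head regularity it is fully regular.'' Read literally, this asserts that the transformed system $\sigma^{\tilde\ell} W \sigma^{-\ell} A$ is head regular, which is neither proved nor true in general: the row reduction with respect to $\sigma^{-1}$ can destroy regularity of the leading coefficient. What you actually need (and what \autoref[Definition]{def:full.reg} asks for) is that the \emph{head regular} system $\dtilde A$ from stage one is fully regular, because you have exhibited a unimodular $P = \sigma^{\tilde\ell} W \sigma^{-\ell} \in \MatGr{\OLP}n$ making $P\dtilde A$ tail regular. The fully regular system that the theorem produces is $\dtilde A$, not its tail regular companion. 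With that clarification the proof is complete and identical in substance to the paper's.
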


\begin{figure}
  \centering
  \begin{tikzpicture}
    \matrix (m) [
    matrix of math nodes, 
    row sep=2ex,
    column sep=12em,
    ] {
      A \in \Mat{\Ore}mn \text{ arbitrary} 
      \\[6ex]
      (\Delta P) A Q =
      \begin{pmatrix}
        \dtilde{A} & \ZERO \\
        \ZERO & \ZERO
      \end{pmatrix}
      \makebox[0cm][l]{\text{ with $\dtilde{A} \in \Mat{\Ore}rr$ head regular}}
      \\[6ex]
      \dtilde{A} \text{ head regular}
      & \sigma^{\tilde\ell} W \sigma^{-\ell} \dtilde{A}
      \text{ tail regular}\\
    };
    \draw[->] (m-1-1) -- node[right] {\small row/column reduction w.\,r.\,t.\ $\sigma$} (m-2-1);
    \draw[->] (m-3-1) -- node[below] {\small row reduction w.\,r.\,t.\ $\sigma^{-1}$} 
    (m-3-2);
    \draw[double, double distance=2pt] (m-2-1) -- 
    node[right] {\small assuming the compatibility conditions hold}
    (m-3-1);
  \end{tikzpicture}
  \caption{Outline of the regularisation.}
  \label{fig:regularise}
\end{figure}
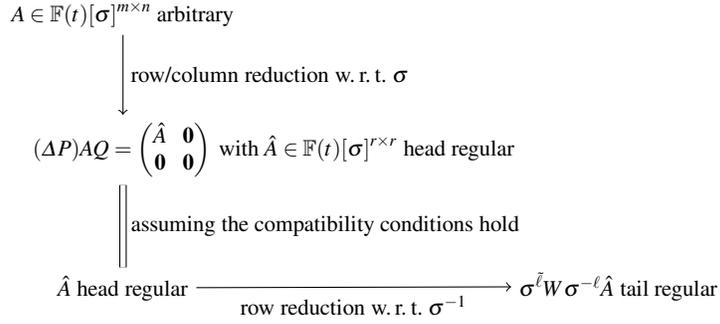

We would like to once more compare our approach to the one taken in
\cite{AbramovBarkatou2014}. They show how to turn a system into
strongly row-reduced form (their version of fully regular as explained
after \autoref[Definition]{def:full.reg} in the proof of their
\cite[Prop.~5]{AbramovBarkatou2014}. Although they start out with
an input of full rank, this is not a severe restriction as the same
preprocessing step (from $A$ to $\hat{A}$) which we used could be
applied in their case, too. Just like our approach, their method
requires two applications of row reduction. They do, however, obtain
full regularity in the opposite order: The first reduction makes the
system tail regular while the second reduction works on the leading
matrix. In our case, the first row reduction (removes unnecessary
equations) and makes the system head regular while the second one
works on the tail. The other big difference is that our second
reduction is \wrt\ $\sigma^{-1}$ while \cite{AbramovBarkatou2014}
rewrite the system in terms of the difference operator $\Delta =
\sigma - 1$ and then reduce \wrt\ $\Delta$. As mentioned after
\autoref[Definition]{def:full.reg}, we cannot with certainty tell yet
which of the two approaches is preferable. That will be a topic for
future research.

\section{Examples}\label{sec:examples}

As a first example, we consider the system
\begin{multline*}
  \begin{pmatrix}
    -2t^2 - t + 1 & 0 \\
    -2t^5 - 9t^4 - 15t^3 - 8t + 3t + 3 
    & -t^7 - 2t^6 - 4t^5 - 6t^4 - 7t^3 - 8t^2 - 4t
  \end{pmatrix}
  y(t+1)
  \\
  +
  \begin{pmatrix}
    t^4 - t^3 + 2t^2 & t^4 - t^3 + 2t^2 \\
    0 & t^7 + 3t^6 + 4t^5 + 5t^4 + 9t^3 + 6t^2 
  \end{pmatrix}
  y(t)
  \\
  =
  \begin{pmatrix}
    0 \\
    2t^5 + 3t^4 + t^3 + 8t^2 + 4t
  \end{pmatrix}.
\end{multline*}
Here, we have $\F = \Q$ and we are in the \SigmaExt\ case with
$\sigma(t) = t + 1$. We can easily see that the leading and trailing
matrices are both regular. Inverting them and computing common
denominators, we arrive at
\begin{equation*}
  m = (2t-1) t (t^2 + t + 2) (t^2 - t + 2) (t+1)^2
\end{equation*}
and
\begin{equation*}
  p = t^2 (t+1) (t^2 - t+ 2) (t^2 + 3t + 3).
\end{equation*}
We have $\spread(\sigma^{-1} m, p) = \{0\}$ and thus the dispersion is
$0$. We obtain the denominator bound
\begin{equation*}
  \gcd(\sigma^{-1} m,p) = t^2 (t^2 - t + 2).
\end{equation*}
This does fit well with the actual solutions for which a $\Q$-basis is
given by
\begin{equation*}
  \frac1{t^2 (t^2 - t + 2)}
  \begin{pmatrix}
    -t (t^2 - t + 2) \\ 
    t^3 - t^2 + 1
  \end{pmatrix}
  \quad\text{and}\quad
  \frac1{t^2 (t^2 - t + 2)}
  \begin{pmatrix}
    -t^3 (t^2 - t + 2) \\ 
    t^5 - t^4 - 3t^2 + 1
  \end{pmatrix}.
\end{equation*}
(We can easily check that those are solutions; and according to
\cite[Thm.~6]{AbramovBarkatou2014} the dimension of the solution space
is $2$.)

\bigskip

For the second example, we consider a $(2,3)$-multibasic rational
difference field over $\F = \Q$; \ie, we consider $\Q(t_1,t_2)$ with
$\sigma(t_1) = 2 t_1$ and $\sigma(t_2) = 3 t_2$. The system in this
example is
\begin{multline*}
  \overbrace{\begin{pmatrix}
      (11 t_1 t_2 - 1) (36 t_1 t_2 - 1) 
      &
      -(11 t_1 t_2 - 1) (36 t_1 t_2 - 1)
      \\
      (4 t_1 - 9 t_2) (2 t_1 - 3 t_2)  
      &
      (4 t_1 - 9 t_2) (2 t_1 - 3 t_2)  
    \end{pmatrix}}^{=: A_2(t_1,t_2)}
  \begin{pmatrix}
    y_1(4 t_1, 9 t_2) \\ 
    y_2(4 t_1, 9 t_2) 
  \end{pmatrix}
  \\
  \;+\;
  \begin{pmatrix}
    -(6 t_1 t_2 - 1) (143 t_1 t_2 - 3)  
    &
    (6 t_1 t_2 - 1) (143 t_1 t_2 - 3)
    \\
    -6 (2 t_1 - 3 t_2) (t_1 - 2 t_2) 
    & 
    -6 (2 t_1 - 3 t_2) (t_1 - 2 t_2) 
  \end{pmatrix}
  \begin{pmatrix}
    y_1(2 t_1, 3 t_2) \\ 
    y_2(2 t_1, 3 t_2) 
  \end{pmatrix}
  \\
  \;+\;
  \underbrace{\begin{pmatrix}
      2 (t_1 t_2 - 1) (66 t_1 t_2 - 1) 
      & 
      -2 (t_1 t_2 - 1) (66 t_1 t_2 - 1)
      \\                                                             
      (4 t_1 - 9 t_2) (t_1 - t_2)  
      &
      (4 t_1 - 9 t_2) (t_1 - t_2)   
    \end{pmatrix}}_{=: A_0(t_1,t_2)}
  \begin{pmatrix}
    y_1(t_1, t_2) \\ 
    y_2(t_1, t_2) 
  \end{pmatrix}
  =
  \begin{pmatrix}
    0 \\
    0
  \end{pmatrix}.
\end{multline*}
This is a $2$-by-$2$ system of order $2$ over
$\mathbb{Q}(t_1,t_2)[\sigma]$. Both the $\sigma$-leading matrix $A_2$
and the $\sigma$-trailing matrix $A_0$ are invertible, which means
that the system is both head and tail regular; hence it is fully
regular. The denominator of $A_2^{-1}$ is
\begin{equation*}
  m = 2 (11 t_1 t_2 - 1) (36 t_1 t_2 - 1) (4 t_1 - 9 t_2) (2 t_1 - 3 t_2)
\end{equation*}
and the denominator of $A_0^{-1}$ is
\begin{equation*}
  p = 4 (t_1 t_2 - 1) (66 t_1 t_2 - 1) (4 t_1 - 9 t_2) (t_1 - t_2).
\end{equation*}
We have $\aperiodic(m)= m$ and $\aperiodic(p) = p$. Following the strategy/algorithm proposed in Remark~\ref{Remark:ImprovedMultivariateDen} we compute the
dispersions \wrt\ $t_1$ and $t_2$ (which turn out to be the same in
this example); obtaining
\begin{equation*}
  D = \disp_{t_1,t_2}(\sigma^{-2}(\aperiodic(m)), \aperiodic(p)) = 0.
\end{equation*}
By \autoref[Corollary]{Cor:PiSigmaNested} it follows that the denominator bound
for this system is
\begin{equation*}
  d 
  = \gcd(\sigma^{-2}(\aperiodic(m)), \aperiodic(d))
  = (t_1 t_2 - 1) (t_1 - t_2).
\end{equation*}
This fits perfectly with the actual $\Q$-basis of the solution space
which is given by
\begin{multline*}
  \frac1{2 (t_1 t_2 - 1) (t_1 - t_2)}
  \begin{pmatrix}
    (t_2 + 1) (t_1 - 1) \\
    (t_2 - 1) (t_1 + 1)
  \end{pmatrix},
  \quad
  \frac1{2 (t_1 - t_2)}
  \begin{pmatrix}
    t_1^2 - t_1 t_2 + 1 \\
    - t_1^2 + t_1 t_2 + 1
  \end{pmatrix},
  \\
  \frac1{4 (t_1 - t_2)}
  \begin{pmatrix}
    2 t_1^2 - 2 t_1 t_2 + 4 t_1 - 3 t_2 \\
    -2 t_1^2 + 2 t_1 t_2 + 4 t_1 - 3 t_2
  \end{pmatrix},
  \\\text{and}\quad
  \frac1{4 (t_1 t_2 - 1) (t_1 - t_2)}
  \begin{pmatrix}
    4 t_1^2 t_2 - 3 t_1 t_2^2 - 2 t_1 + t_2 \\
    4 t_1^2 t_2 - 3 t_1 t_2^2 - 6 t_1 + 5 t_2
  \end{pmatrix}.
\end{multline*}
(It is easy to check that these are solutions; and they are a basis of
the solutions since the dimension of the solution space is $4$
according to \cite{AbramovBarkatou2014}.)

\section{Conclusion}\label{Sec:Conclusion}

Given a \pisiSE-extension $(\F(t),\sigma)$ of $(\F,\sigma)$ and a coupled system of the form~\eqref{eq:system} whose coefficients are from $\F(t)$, we presented algorithms that compute an aperiodic denominator bound $d\in\F[t]$ 
for the solutions under the assumption that the dispersion can be computed in $\F[t]$ (see Theorem~\ref{Thm:GlobalMain}). If $t$ represents a sum, i.e., it has the shift behaviour $\sigma(t)=t+\beta$ for some $\beta\in\F$, this is the complete denominator bound. If $t$ represents a product, i.e., it has the shift behaviour $\sigma(t)=\alpha\,t$ for some $\alpha\in\F^*$, then $t^m\,d$ will be a complete denominator bound for a sufficiently large $m$. It is so far an open problem to determine this $m$ in the $\Pi$-monomial case by an algorithm; so far a solution is only given for the $q$-case with $\sigma(t)=q\,t$ in~\cite{Middeke2017b}. In the general case, one can still guess $m\in\N$, i.e., one can choose a possibly large enough $m$ ($m=0$ if $t$ is a $\Sigma$-monomial) and continue. Namely, plugging $y=\frac{y'}{t^m\,d}$ with the unknown numerator $y'\in\F[t]^n$ into the system~\eqref{eq:system} yields a new system in $y'$ where one has to search for all polynomial solutions $y'\in\F[t]^n$. It is still an open problem to determine a degree bound $b\in\N$ that bounds the degrees of all entries of all solutions $y'$; for the rational case $\sigma(t)=t+1$ see~\cite{AB98} and for the $q$-case $\sigma(t)=q\,t$ see~\cite{Middeke2017b}. In the general case, one can guess a degree bound $b$, i.e., one can choose a possibly large enough $b\in\N$ and continues to find all solutions $y'$ whose degrees of the components are at most $b$. 
This means that one has to determine the coefficients up to degree $b$ in the difference field $(\F,\sigma)$.\\
If $\F=\const(\F,\sigma)$, this task can be accomplished by reducing the problem to a linear system and solving it. Otherwise, suppose that $\F$ itself is a \pisiSE-field over a constant field $\K$. Note that in this case we can compute $d$ (see Lemma~\ref{Cor:DispersionInExtension}), i.e., we only supplemented a tuple $(m,b)$ of nonnegative integers to reach this point.
Now one can use degree reduction strategies as worked out in~\cite{Karr81,Bron:00,Schneider:05a} to determine the coefficients of the polynomial solutions by solving several coupled systems in the smaller field $\F$. In other words, we can apply our strategy again to solve these systems in $\F=\F'(\tau)$ where $\tau$ is again a \pisiSE-monomial: compute the aperiodic denominator bound $d'\in\F'[\tau]$, guess an integer $m'\geq0$ ($m'=0$ if $\tau$ is a $\Pi$-monomial) for a complete denominator bound $\tau^{m'}\,d'$, guess a degree bound $b'\geq0$ and determine the coefficients of the polynomial solutions by solving coupled systems in the smaller field $\F'$. Eventually, we end up at the constant field and solve the problem there by linear algebra. 

Summarising, we obtain a method that enables one to search for all solutions of a coupled system in a \pisiSE-field where one has to adjust certain nonnegative integer tuples $(m,b)$ to guide our machinery. Restricting to scalar equations with coefficients from a \pisiSE-field, the bounds of the period denominator part and the degree bounds has been determined only some years ago~\cite{ABPS:17}. Till then we used the above strategy also for scalar equations~\cite{Schneider:05a} and could derive the solutions in concrete problems in a rather convincing way. It is thus expected that this approach will be also rather helpful for future calculations.


\begin{thebibliography}{10}

\bibitem{Schneider:16b}
Jakob Ablinger, Arnd Behring, Johannes Bl\"umlein, Abilio~De Freitas, Andreas
  von Manteuffel, and Carsten Schneider, \emph{Calculating three loop ladder
  and {V}-topologies for massive operator matrix elements by computer algebra},
  Comput. Phys. Comm. \textbf{202} (2016), 33--112 (english), arXiv:1509.08324
  [hep-ph].

\bibitem{Abramov:71}
Sergei~A. Abramov, \emph{On the summation of rational functions}, Zh. vychisl.
  mat. Fiz. \textbf{11} (1971), 1071--1074.

\bibitem{Abramov:89a}
\bysame, \emph{Rational solutions of linear differential and difference
  equations with polynomial coefficients}, U.S.S.R. Comput. Math. Math. Phys.
  \textbf{29} (1989), no.~6, 7--12.

\bibitem{Abramov1995}
Sergei~A. Abramov, \emph{Rational solutions of linear difference and
  $q$-difference equations with polynomial coefficients}, Programming and
  Computer Software \textbf{21} (1995), no.~6, 273--278, Translated from
  Russian.

\bibitem{Abramov:99}
Sergei~A. Abramov, \emph{Rational solutions of first order linear q-difference
  systems}, Proc. FPSAC 99, 1999.

\bibitem{Abramov2002}
Sergei~A. Abramov, \emph{A direct algorithm to compute rational solutions of
  first order linear $q$-difference systems}, Discrete Mathematics (2002),
  no.~246, 3--12.

\bibitem{AB98}
Sergei~A. Abramov and Moulay~A. Barkatou, \emph{Rational solutions of first
  order linear difference systems}, Proceedings of ISSAC'98 (Rostock), 1998.

\bibitem{AbramovBarkatou1998}
\bysame, \emph{Rational solutions of first order linear difference systems},
  ISSAC'98, 1998.

\bibitem{AbramovBarkatou2014}
Sergei~A.\ Abramov and Moulay~A. Barkatou, \emph{On solution spaces of products
  of linear differential or difference operators}, ACM Communications in
  Computer Algebra \textbf{48} (2014), no.~4, 155--165.

\bibitem{ABPS:17}
Sergei~A. Abramov, Manuel Bronstein, Marko Petkov\v{s}ek, and Carsten
  Schneider, In preparation (2017).

\bibitem{Abramov:11}
Sergei~A. Abramov, Amel Gheffar, and Denis~E. Khmelnov, \emph{Rational
  solutions of linear difference equations: Universal denominators and
  denominator bounds}, Programming and Computer Software \textbf{2} (2011),
  78--86.

\bibitem{AbramovKhmelnov2012}
Sergei~A.\ Abramov and E.\ Khmelnov, D.\, \emph{Denominators of rational
  solutions of linear difference systems of an arbitrary order}, Programming
  and Computer Software \textbf{38} (2012), no.~2, 84--91.

\bibitem{AbramovLeLi2005}
Sergei~A.\ Abramov, Q.\ Le, H.\, and Ziming Li, \emph{Univariate {O}re
  polynomial rings in computer algebra}, Journal of Mathematical Sciences
  \textbf{131} (2005), no.~5, 5885--5903.

\bibitem{AbramovPaulePetkovsec}
Sergei~A. Abramov, Peter Paule, and Marko Petkov{\v s}ek,
  \emph{$q$-hypergeometric solutions of $q$-difference equations}, Discrete
  Mathematics (1998), no.~180, 3--22.

\bibitem{Abramov:94}
Sergei~A. Abramov and Marko Petkov{\v s}ek, \emph{D'{A}lembertian solutions of
  linear differential and difference equations}, Proc. ISSAC'94 (J.~von~zur
  Gathen, ed.), ACM Press, 1994, pp.~169--174.

\bibitem{Abramov:96}
Sergei~A. Abramov and Eugene~V. Zima, \emph{D'{A}lembertian solutions of
  inhomogeneous linear equations (differential, difference, and some other)},
  Proc. ISSAC'96, ACM Press, 1996, pp.~232--240.

\bibitem{Barkatou:93}
Moulay~A. Barkatou, \emph{An algorithm for computing a companion block diagonal
  form for a system of linear differential equations}, Appl. Algebra Eng.
  Commun. Comput. \textbf{4} (1993), 185--195.

\bibitem{Barkatou1999}
Moulay~A. Barkatou, \emph{Rational solutions of matrix difference equations.
  problem of equivalence and factorization}, Proceedings of ISSAC (Vancouver,
  BC), 1999, pp.~277--282.

\bibitem{BauerPetkovsek1999}
Andrej Bauer and Marko Petkov{\v s}sek, \emph{Multibasic and mixed
  hypergeometric {G}osper-type algorithms}, Journal of Symbolic Computation
  \textbf{28} (1999), no.~4--5, 711--736.

\bibitem{BeckermannChengLabahn2006}
Bernhard Beckermann, Howard Cheng, and George Labahn, \emph{Fraction-free row
  reduction of matrices of {O}re polynomials}, Journal of Symbolic Computation
  \textbf{41} (2006), 513--543.

\bibitem{BCP13}
Alin Bostan, Fr{\'e}d{\'e}ric Chyzak, and {\'E}lie de~Panafieu,
  \emph{Complexity estimates for two uncoupling algorithms}, Proceedings of
  ISSAC'13 (Boston), June 2013.

\bibitem{Bron:00}
Manuel Bronstein, \emph{On solutions of linear ordinary difference equations in
  their coefficient field}, J.~Symbolic Comput. \textbf{29} (2000), no.~6,
  841--877.

\bibitem{BP96}
Manuel Bronstein and Marko Petkov{\v s}ek, \emph{An introduction to
  pseudo-linear algebra}, Theoretical Computer Science (1996), no.~157, 3--33.

\bibitem{CPS:08}
William~Y. Chen, Peter Paule, and Husam~L. Saad, \emph{Converging to {G}osper's
  algorithm}, Adv. in Appl. Math. \textbf{41} (2008), no.~3, 351--364.
  \MR{2449596}

\bibitem{Chyzak:00}
Fr{\'e}d{\'e}ric Chyzak, \emph{An extension of {Z}eilberger's fast algorithm to
  general holonomic functions}, Discrete Math. \textbf{217} (2000), 115--134.

\bibitem{ChyzakSalvy98}
Fr{\'e}d{\'e}ric Chyzak and Bruno Salvy, \emph{Non-commutative elimination in
  {O}re algebras proves multivariate identities}, Journal of Symbolic
  Computation \textbf{26} (1998), no.~2, 187--227.

\bibitem{cohn}
Paul~Moritz Cohn, \emph{Free rings and their relations}, 2 ed., Monographs,
  no.~19, London Mathematical Society, London, 1985.

\bibitem{CohnRT}
\bysame, \emph{Introduction to ring theory}, Springer Undergraduate Mathematics
  Series, Springer-Verlag, London, 2000.

\bibitem{Gosper:78}
R.~William Gosper, \emph{Decision procedures for indefinite hypergeometric
  summation}, Proc. Nat. Acad. Sci. U.S.A. \textbf{75} (1978), 40--42.

\bibitem{Singer:99}
Peter~A. Hendriks and Michael~F. Singer, \emph{Solving difference equations in
  finite terms}, J.~Symbolic Comput. \textbf{27} (1999), no.~3, 239--259.

\bibitem{Hoeij:98}
{Mark~van} Hoeij, \emph{Rational solutions of linear difference equations},
  Proc. ISSAC'98, 1998, pp.~120--123.

\bibitem{vanHoeij:99}
\bysame, \emph{Finite singularities and hypergeometric solutions of linear
  recurrence equations}, J.~Pure Appl. Algebra \textbf{139} (1999), no.~1-3,
  109--131.

\bibitem{Karr81}
Michael Karr, \emph{Summation in finite terms}, Journal of the Association for
  Computing Machinery \textbf{28} (1981), no.~2, 305--350.

\bibitem{KS:06}
Manuel Kauers and Carsten Schneider, \emph{Indefinite summation with
  unspecified summands}, Discrete Math. \textbf{306} (2006), no.~17,
  2021--2140.

\bibitem{Schneider:07f}
\bysame, \emph{Symbolic summation with radical expressions}, {Proc. ISSAC'07}
  (C.W. Brown, ed.), 2007, pp.~219--226.

\bibitem{myPhDthesis}
Johannes Middeke, \emph{A computational view on normal forms of matrices of
  {O}re polynomials}, Ph.D. thesis, Johannes Kepler University, Linz, July
  2011, Research Institute for Symbolic Computation (RISC).

\bibitem{Middeke:17}
\bysame, \emph{Denominator bounds and polynomial solutions for systems of
  $q$-recurrences over $k(t)$ for constant $k$}, to appear in Proc. ISSAC'17,
  2017.

\bibitem{Middeke2017b}
Johannes Middeke, \emph{Waterloo workshop on computer algebra}, ch.~Denominator
  Bounds for Higher Order Recurrence Systems over $\Pi\Sigma^*$ Fields,
  submitted, 2017.

\bibitem{Ore33}
{\O}ystein Ore, \emph{Theory of non-commutative polynomials}, The Annals of
  Mathematics \textbf{34} (1933), no.~3, 480--508.

\bibitem{Paule:95}
Peter Paule, \emph{Greatest factorial factorization and symbolic summation},
  J.~Symbolic Comput. \textbf{20} (1995), no.~3, 235--268.

\bibitem{Petkov:92}
Marko Petkov{\v s}ek, \emph{Hypergeometric solutions of linear recurrences with
  polynomial coefficients}, J.~Symbolic Comput. \textbf{14} (1992), no.~2-3,
  243--264.

\bibitem{AeqlB}
Marko Petkov{\v s}ek, Herbert~S.\ Wilf, and Doron Zeilberger, \emph{$a = b$}, A
  K Peters, April 1996, https://www.math.upenn.edu/~wilf/AeqB.html.

\bibitem{Petkov:2013}
Marko Petkov{\v s}ek and Helena Zakraj{\v s}ek, \emph{Solving linear recurrence
  equations with polynomial coefficients}, Computer Algebra in Quantum Field
  Theory: Integration, Summation and Special Functions (Carsten Schneider and
  Johannes Bl\"umlein, eds.), Texts and Monographs in Symbolic Computation,
  Springer, 2013, pp.~259--284.

\bibitem{Schneider:01}
Carsten Schneider, \emph{Symbolic summation in difference fields}, Tech. Report
  01-17, RISC-Linz, J.~Kepler University, November 2001, PhD Thesis.

\bibitem{Schneider:04b}
\bysame, \emph{A collection of denominator bounds to solve parameterized linear
  difference equations in ${\Pi}{\Sigma}$-extensions}, An. Univ. Timi\c soara
  Ser. Mat.-Inform. \textbf{42} (2004), no.~2, 163--179, Extended version of
  Proc. SYNASC'04.

\bibitem{Schneider:05b}
\bysame, \emph{Degree bounds to find polynomial solutions of parameterized
  linear difference equations in ${\Pi}{\Sigma}$-fields}, Appl. Algebra Engrg.
  Comm. Comput. \textbf{16} (2005), no.~1, 1--32.

\bibitem{Schneider:05d}
\bysame, \emph{A new {S}igma approach to multi-summation}, Adv. in Appl. Math.
  \textbf{34} (2005), no.~4, 740--767.

\bibitem{Schneider:05c}
\bysame, \emph{Product representations in ${\Pi}{\Sigma}$-fields}, Ann. Comb.
  \textbf{9} (2005), no.~1, 75--99.

\bibitem{Schneider:05a}
\bysame, \emph{Solving parameterized linear difference equations in terms of
  indefinite nested sums and products}, J. Differ. Equations Appl. \textbf{11}
  (2005), no.~9, 799--821.

\bibitem{Schneider:07d}
\bysame, \emph{Simplifying sums in {$\Pi\Sigma$}-extensions}, J. Algebra Appl.
  \textbf{6} (2007), no.~3, 415--441.

\bibitem{Schneider:13a}
\bysame, \emph{Simplifying multiple sums in difference fields}, {Computer
  Algebra in Quantum Field Theory: Integration, Summation and Special
  Functions} (Carsten Schneider and Johannes Bl\"umlein, eds.), Texts and
  Monographs in Symbolic Computation, Springer, 2013, arXiv:1304.4134 [cs.SC],
  pp.~325--360.

\bibitem{Schneider:15}
\bysame, \emph{Fast algorithms for refined parameterized telescoping in
  difference fields}, Computer Algebra and Polynomials (M.~Weimann
  J.~Guitierrez, J.~Schicho, ed.), Lecture Notes in Computer Science (LNCS),
  no. 8942, Springer, 2015, arXiv:1307.7887 [cs.SC], pp.~157--191.

\bibitem{Schneider:16a}
\bysame, \emph{A difference ring theory for symbolic summation}, J. Symb.
  Comput. \textbf{72} (2016), 82--127, arXiv:1408.2776 [cs.SC].

\bibitem{Schneider:17}
\bysame, \emph{{Summation Theory II: Characterizations of
  $R\Pi\Sigma$-extensions and algorithmic aspects}}, J. Symb. Comput.
  \textbf{80} (2017), no.~3, 616--664 (english), arXiv:1603.04285 [cs.SC].

\bibitem{Zeilberger:90a}
Doron Zeilberger, \emph{A holonomic systems approach to special functions
  identities}, J.~Comput. Appl. Math. \textbf{32} (1990), 321--368.

\bibitem{Zeilberger:91}
\bysame, \emph{The method of creative telescoping}, J.~Symbolic Comput.
  \textbf{11} (1991), 195--204.

\bibitem{Zuercher:94}
B.~Z\"urcher, \emph{Rationale normalformen von pseudo-linearen abbildungen},
  Ph.D. thesis, Mathematik, ETH Z\"urich, 1994.

\end{thebibliography}

\providecommand{\bysame}{\leavevmode\hbox to3em{\hrulefill}\thinspace}
\providecommand{\MR}{\relax\ifhmode\unskip\space\fi MR }
\providecommand{\MRhref}[2]{%
  \href{http://www.ams.org/mathscinet-getitem?mr=#1}{#2}
}
\providecommand{\href}[2]{#2}

\end{document}